\Crefname{algocf}{Algorithm}{Algorithms}
\Crefname{claim}{Claim}{Claims} 
\Crefname{problem}{Problem}{Problems}
\Crefname{fact}{Fact}{Facts}
\Crefname{observation}{Observation}{Observations}
\title{Efficient Matroid Intersection via a Batch-Update Auction Algorithm}
\author{
Joakim Blikstad\thanks{
        KTH Royal Institute of Technology \& Max Planck Institute for Informatics,
        \texttt{blikstad@kth.se}.
        Supported by the Swedish Research Council (Reg. No. 2019-05622) and the Google PhD Fellowship Program.}
\and
Ta-Wei Tu\thanks{
        Stanford University,
        \texttt{taweitu@stanford.edu}.
        Supported by a Stanford School of Engineering Fellowship, a Microsoft Research Faculty Fellowship, and NSF CAREER Award CCF-1844855.}
}
\date{}
\declaretheorem[numberwithin=section,refname={Theorem,Theorems},Refname={Theorem,Theorems}]{theorem}
\declaretheorem[numberlike=theorem]{lemma}
\declaretheorem[numberlike=theorem]{corollary}
\declaretheorem[numberlike=theorem]{claim}
\declaretheorem[numberlike=theorem]{fact}
\declaretheorem[numberlike=theorem,style=remark]{remark}
\newcommand{\I}{\mathcal{I}}
\newcommand{\M}{\mathcal{M}}
\newcommand{\R}{\mathbb{R}}
\newcommand{\eps}{\varepsilon}
\newcommand{\rank}{\mathsf{rank}}
\newcommand{\spn}{\mathsf{span}}
\newcommand{\poly}{\mathrm{poly}}
\newcommand{\defeq}{\stackrel{\mathrm{{\scriptscriptstyle def}}}{=}}
\begin{document}

\maketitle

\begin{abstract}
Given two matroids $\M_1$ and $\M_2$ over the same $n$-element ground set, the matroid intersection problem is to find a largest common independent set, whose size we denote by $r$.
We present a simple and generic auction algorithm that reduces $(1-\eps)$-approximate matroid intersection to roughly $1/\eps^2$ rounds of the easier problem of finding a maximum-weight basis of a single matroid.
Plugging in known primitives for this subproblem, we obtain both simpler and improved algorithms in two models of computation, including:
\begin{itemize}
\item
The first near-linear time/independence-query $(1-\eps)$-approximation algorithm for matroid intersection. Our randomized algorithm uses\footnote{Throughout the paper we use $\tilde{O}(\cdot)$ to hide $\mathrm{polylog}\,n$ factors.} $\tilde{O}(n/\eps + r/\eps^5)$ independence queries, improving upon the previous $\tilde{O}(n/\eps + r\sqrt{r}/{\eps^3})$ bound of Quanrud (2024).
\item
The first sublinear exact parallel algorithms for weighted matroid intersection, using
$O(n^{2/3})$ rounds of rank queries or $O(n^{5/6})$ rounds of independence queries. For the unweighted case, our results improve upon the previous $O(n^{3/4})$-round rank-query and $O(n^{7/8})$-round independence-query algorithms of Blikstad (2022).
\end{itemize}
\end{abstract}

\section{Introduction}

\paragraph{Matroid Intersection.}

A matroid is a fundamental combinatorial object capturing a notion
of \emph{independence} on the subsets of a so-called \emph{ground set} (see \cref{sec:prelim} for a formal definition).
For instance, the set of forests of a graph forms a matroid, where a subset of edges is called independent if and only if it has no cycles.
In this paper, we consider the well-studied optimization problem called \emph{matroid intersection}:
Given two matroids over the same ground set~$V$, we want to compute the largest set that is independent in both matroids.
The matroid intersection problem naturally generalizes a wide range of combinatorial optimization problems including bipartite matching, arborescences, colorful spanning trees, tree packing, etc.
Throughout this paper, we let $n$ denote the size of the ground set and $r$ the size of the answer (i.e., the largest common independent set).

\paragraph{Oracle Models.}
Since a matroid in general requires exponentially (in $n$) many bits to describe,
algorithms commonly assume oracle access to the matroids instead of an explicit representation.
The most standard query model is the \emph{independence} oracle that, when given a set $S \subseteq V$, returns whether $S$ is independent in the matroid or not.
Another well-studied oracle is the stronger \emph{rank} oracle that instead outputs the \emph{rank} of $S$, i.e., the size of the largest independent subset of $S$.
The efficiency of matroid intersection algorithms is measured by the number of queries (independence or rank). In most cases, our algorithms including, the running times are dominated by asking queries.

\paragraph{Parallel Algorithms.}
Apart from the standard sequential model of computation, we also consider parallel algorithms.
In this model, the algorithm is allowed to issue up to $\poly(n)$ queries to the oracle in parallel in each round.
These queries can only depend on the outcome of queries asked in previous rounds but not on each other.
The goal is to use as few number of rounds of queries---usually called the \emph{depth} or \emph{adaptivity} of the algorithm.

\paragraph{A New Auction Algorithm.}
Underlying our results is a simple and unified framework that reduces (additive) approximate matroid intersection to $O(1/\eps^2)$ rounds of the much easier problem of finding a maximum-weight basis of a \emph{single} matroid.
The reduction is inspired by an almost 30-year-old auction algorithm\footnote{Note that this is not for an auction problem involving matroids (e.g.,~\cite{BikhchandaniVSV11}), but rather an auction-style matroid intersection algorithm.} proposed in \cite{fujishige1995efficient,ShigenoI95}.
In the auction algorithm, elements in the matroid are assigned \emph{weights} that are adjusted iteratively throughout to facilitate finding a large common independent set.
After each adjustment, the algorithm needs to make queries to the oracle in order to maintain a maximum-weight basis for each matroid according to the new weights.
Our key observation is that, instead of adjusting the weight of one element at a time, we can batch and perform multiple (roughly $\eps n$ many) adjustments simultaneously. This means that we only need to recompute the two maximum weight bases for the two matroids a total of  $O(1/\eps^2)$ times.
This idea is reminiscent of a similar cross-paradigm auction algorithm for bipartite matching in the streaming and massively parallel computation settings by \cite{AssadiLT21}.

\paragraph{Our Results.}
With our auction algorithm we can plug in known primitives for finding a maximum-weight basis of a (single) matroid. This leads to improved and simple algorithms both in the sequential and parallel settings, both for rank and independence oracles.

Our first set of results are faster approximation algorithms for matroid intersection under the independence oracle model.
We show a simple algorithm that achieves an \emph{$\eps n$-additive} approximation in $O(n/\eps^2)$ independence queries.

\begin{restatable}{theorem}{Approximate}
  \label{thm:approximate}
  There is an $O(n/\eps^2)$-independence-query algorithm that computes an $S \in \I_1 \cap \I_2$ of size $|S| \geq r - \eps n$.
\end{restatable}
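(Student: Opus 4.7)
The plan is to instantiate the batch-update auction sketched in the introduction. Maintain integer prices $p\colon V\to\{0,1,\dots,K\}$ with $K=\lceil 1/\eps\rceil$, initialized to $p\equiv 0$. In each round, compute $B_1$, a maximum-$p$-weight basis of $\M_1$, and $B_2$, a maximum-$(K-p)$-weight basis of $\M_2$, via the standard single-matroid greedy algorithm; this costs $O(n)$ independence queries per basis. Return the candidate $S:=B_1\cap B_2\in\I_1\cap\I_2$ once a termination condition below is met.

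The near-optimality certificate comes from LP duality. For any $T\in\I_1\cap\I_2$ of size $r$, the identity $K|T|=p(T)+(K-p)(T)$ together with the fact that each $B_i$ is a maximum-weight basis (so $p(T)\le p(B_1)$ and $(K-p)(T)\le(K-p)(B_2)$) gives
\[
K(r-|S|)\;\le\;p(B_1\setminus B_2)+(K-p)(B_2\setminus B_1).
\]
Whenever the right-hand side is at most $K\eps n$ we therefore have $|S|\ge r-\eps n$ and return. Conversely, if this fails, a direct counting bound using $p(e)\le K$ on each summand shows that
\[
\bigl|\{e\in B_1\setminus B_2:p(e)>0\}\bigr|+\bigl|\{e\in B_2\setminus B_1:p(e)<K\}\bigr|>\eps n,
\]
so there are more than $\eps n$ elements on which a batch update can fire.

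The batch update decrements $p(e)$ by one for each element in the first set above and increments $p(e)$ by one for each element in the second set. Intuitively, this nudges elements of $B_1\setminus B_2$ into $B_2$'s favor (by lowering their price for $\M_2$) and elements of $B_2\setminus B_1$ into $B_1$'s favor. It remains to bound the number of rounds by $O(1/\eps^2)$, which combined with the $O(n)$ queries per round yields the claimed $O(n/\eps^2)$ total.

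The main obstacle is precisely this round-count bound. Because the update is symmetric, $\sum_e p(e)$ is not monotone and cannot serve directly as a potential; instead I expect to use a distance-to-optimum potential such as $\sum_e|p(e)-p^*(e)|$ for an optimal integral dual split $p^*$, which is bounded by $O(nK)=O(n/\eps)$, and to show via matroid exchange between $B_1$, $B_2$, and the structure of $p^*$ that each non-terminal batch update decreases this potential by $\Omega(\eps n)$. This is essentially the batched analogue of the convergence proof for the classical single-update Fujishige--Shigeno auction, where the crux lies in arguing that coordinated simultaneous updates on $\Omega(\eps n)$ elements cannot ``fight each other'' enough to cancel out the amortized gain.
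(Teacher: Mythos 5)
Your duality/termination argument is clean and correct: the chain $Kr \le p(B_1) + (K-p)(B_2) = K|B_1\cap B_2| + p(B_1\setminus B_2) + (K-p)(B_2\setminus B_1)$ does certify $|S|\ge r-\eps n$ once the right-hand side drops below $K\eps n$, and the counting step showing that otherwise more than $\eps n$ elements can fire is also fine. But the round bound---which is the entire content of the $O(n/\eps^2)$ query claim---is left unproven, and you correctly identify it as the crux. Your symmetric update (decrement prices on $B_1\setminus B_2$, increment on $B_2\setminus B_1$) destroys monotonicity of the prices, and the proposed fix via the potential $\sum_e|p(e)-p^*(e)|$ is speculative: there is no apparent reason a batch update moves prices toward a particular optimal dual $p^*$ by $\Omega(\eps n)$ per round, since elements can oscillate between $B_1\setminus B_2$ and $B_2\setminus B_1$ (fired in opposite directions in consecutive rounds), and freshly entering elements can keep the firing sets large without any net drift toward $p^*$. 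Auction/push-relabel analyses do not converge to a fixed optimum; they rely on labels moving monotonically within a bounded range, which your scheme abandons.

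The paper's design differs in exactly the ways needed to close this gap. It only ever adjusts elements of $S_1\setminus S_2$ (never $S_2\setminus S_1$), each adjustment strictly increases a separate counter $p(e)$, and an element is frozen out of $X$ once $p(e)$ reaches $2/\eps$; hence $\Phi=\sum_e p(e)$ is monotone, bounded by $2n/\eps$, and increases by at least $\Delta=\eps n$ per round, giving $O(1/\eps^2)$ rounds immediately (\cref{lemma:iter}). The price of this one-sidedness is that correctness no longer follows from a two-line weak-duality computation: the paper needs the tie-breaking rule (prefer elements of the old basis) and \cref{claim:s2-minus-s1} (that $S_2\setminus S_1$ is decremental, so $w_2=0$ there) to build an explicit feasible dual pair $(A,B)$ with $\rank_1(A)+\rank_2(B)\le|S|+\eps r+\Delta$, plus an averaging trick over thresholds $t$ to control $|S\cap A\cap B|$. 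To salvage your write-up, you would either have to make your updates one-sided and capped (at which point you essentially recover the paper's algorithm and should redo the approximation guarantee along the paper's lines), or supply a genuinely new convergence argument for the symmetric batch update, which is not a routine adaptation of the classical single-update analysis.
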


By combining \cref{thm:approximate} with the recent adaptive (and randomized) sparsification framework of \cite{Quanrud24} that reduces the ground set to be of size $\tilde{O}(r/\eps)$, we obtain
the \emph{first near-linear time/independence-query approximation scheme} for matroid intersection. This improves on previous state of the art which consists of query complexities $\tilde{O}(n\sqrt{r}/\eps)$ \cite{Blikstad21}, $\tilde{O}(n^2/(r\eps^2) + r^{1.5}/\eps^{4.5})$ \cite{ChakrabartyLS0W19}, and
$\tilde{O}(n/\eps + r^{1.5}/\eps^{3})$ \cite{Quanrud24}, the latter being near-linear only in the ``sparse'' regime where $r \le n^{2/3}$.

\begin{restatable}{theorem}{MultApproximate}
  \label{thm:mult-approximate}
  There is an randomized $O\left(\frac{n\log n}{\eps} + \frac{r\log^3 n}{\eps^5}\right)$-independence-query algorithm that computes an $S \in \I_1 \cap \I_2$ of size $|S| \geq (1-\eps)r$.
\end{restatable}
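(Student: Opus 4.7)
The plan is to combine \cref{thm:approximate} with the recent sparsification framework of \cite{Quanrud24}. As indicated in the excerpt, that framework uses $\tilde{O}(n/\eps)$ independence queries to produce a random subset $V' \subseteq V$ of size $n' = \tilde{O}(r/\eps)$ such that, with high probability, the restricted instance $(\M_1|_{V'}, \M_2|_{V'})$ still admits a common independent set of size at least $(1-\eps)\,r$. Note that, because $V' \subseteq V$, independence queries for $\M_i|_{V'}$ are answered by the original oracle, so the two subroutines compose seamlessly.

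On this sparsified instance I would then invoke \cref{thm:approximate} with a finer accuracy parameter $\eps' \defeq \Theta(\eps^2/\log n)$, chosen so that the additive error $\eps'\, n'$ is at most $\eps\,r$. The returned $S \in \I_1|_{V'} \cap \I_2|_{V'} \subseteq \I_1 \cap \I_2$ then satisfies
\[
|S| \;\geq\; (1-\eps)\, r - \eps'\, n' \;\geq\; (1-\eps)\, r - \eps\, r \;=\; (1-2\eps)\, r,
\]
which gives the claimed $(1-\eps)$-multiplicative approximation after rescaling $\eps$ by a constant factor.

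For the query complexity, the sparsification contributes the $O((n\log n)/\eps)$ term, and running the additive algorithm of \cref{thm:approximate} on $V'$ costs
\[
O\!\left(\frac{n'}{(\eps')^2}\right) \;=\; O\!\left(\frac{(r/\eps)\,\log n}{\eps^4/\log^2 n}\right) \;=\; O\!\left(\frac{r \log^3 n}{\eps^5}\right),
\]
matching the second term of the bound. Summing the two and absorbing constants yields the desired complexity.

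I do not expect a serious obstacle: this is a black-box composition, and correctness follows purely from parameter tuning (balancing $n'$ against $\eps'$) together with a union bound over the success events of the randomized sparsification. The only part that warrants care is stating Quanrud's guarantee in the exact form I need, namely that $(1-\eps)r$ survives inside $V'$ with high probability; once that is in hand, the argument above is routine.
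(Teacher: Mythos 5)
There is a genuine gap, and it is exactly the step you flag as "the only part that warrants care": the sparsification guarantee you assume does not exist in the form you need. Quanrud's framework is \emph{not} a one-shot reduction producing a single random subset $V'$ of size $\tilde{O}(r/\eps)$ on which the optimum survives up to a $(1-\eps)$ factor --- and no such uniform-sampling sparsifier can exist for matroid intersection (e.g., if $r=1$ and a single specific element forms the optimum, a random $\tilde{O}(1/\eps)$-size sample misses it with high probability). The actual framework (\cref{lemma:sparsification}) is an adaptive multiplicative-weights scheme: it samples $O(\log n/\eps)$ different subsets $U$, each according to weights that depend on the outcomes of previous iterations, and only guarantees that \emph{one of} the returned primal solutions is large. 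Crucially, to perform the reweighting, the subroutine called on each $U$ must return not just a common independent set $S_U$ but also a \emph{dual certificate} $(A_U,B_U)$ with $A_U\cup B_U=U$ and $\rank_1(A_U)+\rank_2(B_U)\le |S_U|+O(\eps r)$; the elements not covered by the spans of $A_U$ and $B_U$ are the ones whose sampling weights get increased. Your black-box invocation of \cref{thm:approximate} supplies no such dual, so it cannot be plugged into the framework. This is precisely why the paper proves \cref{lemma:dual}, which extracts $(A_U,B_U)$ from the weights maintained by \cref{alg:auction}.

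Two further calibration points that your plan glosses over. First, the additive error of the subroutine must be measured against the rank $r$ of the \emph{original} instance, not against $|U|$ or the subproblem's own optimum $r_U$ (which can be much smaller than $|U|$); this is why the paper runs \cref{alg:auction} with $\Delta = O(\eps r)$ rather than shrinking $\eps$ to $\Theta(\eps^2/\log n)$ as you do. Second, your arithmetic for the second term happens to land on $O(r\log^3 n/\eps^5)$, but for a different reason than in the paper: there, each of the $O(\log n/\eps)$ oracle calls costs $O(|U|^2/(\eps^2 r)) = O(r\log^2 n/\eps^4)$ queries, and the $\log^3 n/\eps^5$ arises from multiplying by the number of MWU iterations. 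So the final bound is right, but the proof as proposed does not go through without replacing the one-shot sparsification premise by the adaptive, dual-requiring \cref{lemma:sparsification} and supplying the dual via \cref{lemma:dual}.
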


\begin{remark}
The state-of-the-art exact matroid intersection algorithm~\cite{BlikstadBMN21,Blikstad21} is obtained by first running an approximation algorithm to get a sufficiently large solution and then falling back to finding augmenting paths one by one.
There are currently two tight bottlenecks in improving the $\tilde{O}(n^{7/4})$ bound of \cite{Blikstad21}, the first one being the approximate algorithm and the second being the augmentation algorithm.
Our improved approximation algorithm resolves the first bottleneck, leaving only the second.\footnote{Here we consider mostly the regime where $r \approx n$. When $r \ll n$, the $\tilde{O}(n/\eps + r^{1.5}/\eps^3)$ time algorithm of \cite{Quanrud24} could similarly lead to a faster exact algorithm if there was a faster augmentation algorithm.}
This means that, if one can find a single augmenting path faster than the  $\tilde{O}(n\sqrt{r})$-independence-query augmentation algorithm of \cite{BlikstadBMN21}, it would now, when combined with our \cref{thm:approximate}, directly imply a faster exact matroid intersection algorithm.
\end{remark}

Our auction framework also works well in the parallel setting. When combined with standard parallel primitives, we obtain faster parallel approximation algorithms for matroid intersection, which also implies faster parallel \emph{exact} algorithms.
When $r = \Theta(n)$, our algorithms take $O(n^{2/3})$ rounds of rank queries or $O(n^{5/6})$ rounds of independence queries, improving on the $O(n^{3/4})$ rank-query and $O(n^{7/8})$ independence-query bounds of \cite{Blikstad22}.
This brings us slightly closer towards the information-theoretic lower bound of $\tilde{\Omega}(n^{1/3})$ rounds of independence queries \cite{KarpUW85} or rank queries \cite{ChakrabartyCK21}.

\begin{restatable}{theorem}{Parallel}
  \label{thm:parallel}
  There is an $O(n^{1/3}r^{1/3})$-round parallel rank-query algorithm and an $O(n^{1/3}r^{1/2}\log(n/r))$-round parallel independence-query algorithm for matroid intersection.
\end{restatable}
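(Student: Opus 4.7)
The plan is to combine the approximation algorithm of \cref{thm:approximate} with a parallel exact-augmentation stage and to balance the error parameter $\eps$, thereby obtaining the two stated round complexities. The overall algorithm has two stages.

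In the first stage, I would run \cref{thm:approximate} with a parameter $\eps$ (to be fixed at the end) to produce a common independent set $S_0$ with $|S_0| \ge r - \eps n$. The relevant feature is that \cref{thm:approximate} performs only $O(1/\eps^2)$ outer iterations, and each iteration reduces---via the batched auction updates---to the problem of maintaining a maximum-weight basis of a single matroid after a batch of weight changes. A maximum-weight basis of a single matroid is available cheaply in parallel: in the rank model it can be obtained in $\tilde{O}(1)$ rounds by sorting the elements by weight and issuing, in a single parallel round, rank queries on all weight-prefixes; in the independence model the same primitive costs an additional factor coming from a known parallel greedy routine (e.g., the one used in \cite{Blikstad22}).

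In the second stage, starting from $S_0$, I would boost to an exact optimum by running parallel augmentations on the matroid exchange graph. Since the deficit after the first stage is at most $\eps n$, I would plug in the parallel augmenting-path primitive of \cite{Blikstad22}---parallel BFS on the exchange graph combined with Hopcroft--Karp-style phasing on short augmenting paths---whose round complexity is expressible in terms of the deficit $\eps n$ and the target rank $r$. Finally, I would choose $\eps$ to equate the two stages' costs: carrying out this balancing is expected to yield $O(n^{1/3} r^{1/3})$ rounds in the rank model, while in the independence model the extra overhead of the parallel weighted-basis computation propagates to give the claimed $O(n^{1/3} r^{1/2} \log(n/r))$ bound, with the $\log(n/r)$ factor originating from that primitive.

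The main obstacle I anticipate is not the high-level framework---both stages are instantiations of existing primitives---but rather the careful selection and analysis of the parallel basis and augmentation subroutines in each query model, and the verification that balancing against $\eps$ gives precisely the claimed exponents. In particular, pinning down the $\log(n/r)$ factor in the independence bound requires a tight analysis of the parallel weighted-basis primitive, and ensuring that the Hopcroft--Karp-style phasing after Stage~1 pays only polylogarithmically (rather than polynomially) in $r$ beyond the $\eps n$ deficit.
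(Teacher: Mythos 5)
Your two-stage outline (auction-based approximation, then parallel augmentation, then balance $\eps$) is the same skeleton as the paper's proof, but as written it does not give the claimed exponents, and the reason is quantitative. You instantiate Stage~1 as \cref{thm:approximate}, i.e., the auction algorithm with $\Delta=\eps n$, which yields deficit $O(\eps n)$ after $O(1/\eps^2)$ rounds of max-weight-basis computations. Since each augmentation costs at least one round, Stage~2 then costs $\Omega(\eps n)$ rounds, and balancing $1/\eps^2$ against $\eps n$ gives $\eps=n^{-1/3}$ and a total of $O(n^{2/3})$ rounds in the rank model (and $O(n^{2/3}r^{1/6}\log(n/r))$ in the independence model). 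This matches the claimed $O(n^{1/3}r^{1/3})$ and $O(n^{1/3}r^{1/2}\log(n/r))$ bounds only when $r=\Theta(n)$; for $r\ll n$ it is strictly worse. The missing idea is to decouple the batch-size parameter from the error: the paper runs \cref{alg:auction} with $\Delta=\eps r$ (not $\eps n$), so that by \cref{lemma:meta} the deficit is $O(\eps r)$ while by \cref{lemma:iter} the number of basis recomputations is $O(n/(\eps\Delta))=O(n/(\eps^2 r))$; balancing $n/(\eps^2 r)$ against $\eps r$ gives $\eps=n^{1/3}r^{-2/3}$ and the $O(n^{1/3}r^{1/3})$ rank bound (and analogously $\eps=n^{1/3}r^{-1/2}$ for independence queries). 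This requires knowing $r$, which the paper handles by running all guesses $\tilde r\in\{0,\dots,n\}$ in parallel on matroids truncated to rank $\tilde r$; the truncation is also what makes the independence-query basis primitive cost $O(\sqrt{r}\log(n/r))$ rounds rather than $O(\sqrt{\rank(\M)}\log(n/\rank(\M)))$ for a possibly much larger rank.

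A secondary point: your Stage~2 invokes a Hopcroft--Karp-style blocking-flow/phasing primitive from \cite{Blikstad22}. The paper deliberately avoids this---it simply builds the exchange graph in one round of $O(nr)$ queries and augments by one element per round (\cref{fact:augment}), paying exactly the deficit in rounds. Once the deficit is tuned to $O(n^{1/3}r^{1/3})$ (resp.\ $O(n^{1/3}r^{1/2})$), this naive augmentation already matches Stage~1's cost, so the phasing machinery (whose analysis for matroid exchange graphs is genuinely delicate) is neither needed nor helpful here. Your identification of the $\log(n/r)$ factor as coming from the parallel max-weight-basis primitive in the independence model is correct.
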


In the parallel setting, the auction algorithm also fits well into the \emph{weight-splitting} approach of \cite{fujishige1995efficient,ShigenoI95} for solving \emph{weighted} matroid intersection. This allows us to obtain the first sublinear round parallel algorithms for the problem, matching our unweighted results up to a $\log (Wr)$-overhead (where $W$ is the largest weight). Previously, no non-trivial bound was known for the weighted case.

\begin{restatable}{theorem}{ParallelWeighted}
  \label{thm:parallel-weighted}
  There is an $O(n^{1/3}r^{1/3}\log (Wr))$-round parallel rank-query algorithm and an $O(n^{1/3}r^{1/2}\log (Wr)\log(n/r))$-round parallel independence-query algorithm for weighted matroid intersection.
\end{restatable}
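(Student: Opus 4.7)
The plan is to combine the batch-update auction framework with the classical weight-splitting primal-dual technique of Fujishige, Iri, and Shigeno, applied in the parallel setting using the same max-weight-basis primitives that underlie \cref{thm:parallel}. Recall the weight-splitting formulation: for an integer weight vector $w$, we maintain a decomposition $w = w_1 + w_2$ together with a common independent set $S \in \I_1 \cap \I_2$ such that $S$ is simultaneously a maximum-weight independent set of $\M_1$ under $w_1$ and of $\M_2$ under $w_2$. By LP duality this certifies optimality, and the auction evolves the triple $(S, w_1, w_2)$ by ``auctioning'' element prices between the two matroids, exactly mirroring the unweighted auction except that each price shift moves weight between $w_1$ and $w_2$ rather than adjusting a single common weight.

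Second, the key adaptation for the parallel model is to batch-update: instead of moving one element's price at a time, we apply up to roughly $\eps n$ price shifts simultaneously, following the unweighted argument behind \cref{thm:parallel}. After each batch we recompute the maximum-weight bases of $\M_1$ under $w_1$ and of $\M_2$ under $w_2$ using the parallel single-matroid primitives, and extract the next common independent set from the exchange graph in the same manner as the unweighted case. This yields a parallel $\eps n$-additive approximation for weighted matroid intersection in $O(1/\eps^2)$ batched rounds, each costing $O(n^{1/3}r^{1/3})$ rank queries or $O(n^{1/3}r^{1/2}\log(n/r))$ independence queries.

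Third, to drive the approximation to exactness for integer weights in $[1, W]$, I would wrap the auction in a standard weight-scaling loop over $O(\log(Wr))$ stages, refining the split $w = w_1 + w_2$ from the most significant bit downward and carrying the near-optimal $(S, w_1, w_2)$ between stages. Each stage invokes the batched auction only $O(1)$ times with an appropriately chosen $\eps$, so multiplying by the number of stages produces the claimed $O(n^{1/3}r^{1/3}\log(Wr))$ and $O(n^{1/3}r^{1/2}\log(Wr)\log(n/r))$ round complexities, with exactness at the final scale following from the standard integrality argument for the matroid intersection polyhedron.

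The main obstacle is that the weight-splitting optimality invariant is delicate under batched price updates: when many element prices shift simultaneously, one must argue that the auction's potential still decreases by a nontrivial amount per batch, or else that the invariant can be restored by at most one additional max-weight basis computation per matroid. A related subtlety is ensuring that the additive slack introduced at each new bit of the scaling loop is absorbed into the evolving split $(w_1, w_2)$ without accumulating across stages, which requires careful bookkeeping of the dual prices between scales.
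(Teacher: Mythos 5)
Your high-level plan coincides with the paper's: weight-splitting in the style of \cite{fujishige1995efficient,ShigenoI95}, a batched auction per scale, and an outer $O(\log(Wr))$-stage scaling loop. However, two steps that you explicitly flag as ``obstacles'' are not incidental difficulties but the actual content of the proof, and as written your argument does not close them.

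First, the per-scale exactness. You claim exactness is needed only ``at the final scale'' via integrality, but the scaling invariant requires an \emph{exactly} $(w_1^{(\zeta)},w_2^{(\zeta)})$-optimal common \emph{basis} $S^{(\zeta)}$ at the end of \emph{every} scale: the analysis of the next scale's auction (the bound $q(S_1\setminus S_2)\le 3r$ in \cref{lemma:meta-weighted}) is driven by the existence of such a basis from the previous scale. The auction alone only delivers two bases with $|S_1\cap S_2|\ge r-(3\eps r+\Delta)$, so each scale needs a second phase: $O(\eps r+\Delta)$ augmentations along shortest augmenting paths in the \emph{weighted} exchange graph, each taking one parallel round and updating the split so as to preserve $w_i$-maximality. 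This phase is also what dictates the parameter choice---$\eps$ and $\Delta$ are balanced so that the $O(n/(\eps\Delta))$ auction rounds and the $O(\eps r+\Delta)$ augmentation rounds are both $O(n^{1/3}r^{1/3})$ (resp. $O(n^{1/3}r^{1/2})$). Your accounting (``$O(1/\eps^2)$ batched rounds, each costing $O(n^{1/3}r^{1/3})$ rank queries'') does not reflect this trade-off; with rank queries each basis recomputation is a single round, and the $n^{1/3}r^{1/3}$ factor arises only from balancing against the augmentation phase you omit.

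Second, the approximation guarantee for the weighted batched auction. In the unweighted case the paper argues via an explicit dual pair $(A,B)$; that argument does not transfer to the weighted setting, and the paper instead uses a primal potential argument: with $q(e)=\lfloor p(e)/2\rfloor$ one relates $q(S_1\setminus S_2)$ to the weight of the previous scale's optimal basis $S^{(2\zeta)}$ and the $2\zeta$-approximateness of the old split, obtaining $q(S_1\setminus S_2)\le 3r$ and hence $|S_1\cap S_2|\ge r-(3\eps r+\Delta)$. You state this termination bound as an unresolved obstacle rather than proving it, so the claimed round complexity is not yet justified. (The batching itself is fine---\cref{claim:s2-minus-s1} and the tie-breaking carry over verbatim---but the quantitative bound is the step that must be supplied.)
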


While the efficiencies of the above theorems are stated in terms of the number of independence queries or the query depth, it will be clear from the description of the algorithms that the (sequential) running times or the (parallel) work depths of them are nearly\footnote{As we will see later in \cref{sec:parallel}, one component of our parallel algorithms involves building the exchange graph and then finding an augmenting path in it. While the graph can be constructed in a single round of queries, solving the reachability problem in a directed graph requires $O(\log n)$ rounds of adaptivity. This incurs an $O(\log n)$ parallel depth overhead to the query depth.} dominated by the former quantities.
Our algorithms also trivially work in the recently introduced \emph{dynamic} oracle model~\cite{BlikstadMNT23}.

\paragraph{Related Work.}
In the 1960s, Edmonds showed the first polynomial query algorithm to solve matroid intersection. Since then, there has been a long line of research
(e.g., \cite{edmonds1968matroid, edmonds1970submodular,aignerD,Lawler75,cunningham1986improved,fujishige1995efficient,ShigenoI95,LeeSW15,nguyen2019note,BlikstadBMN21,Quanrud24})
 culminating in a $\tilde{O}(n\sqrt{r})$-rank query algorithm \cite{ChakrabartyLS0W19} and a $\tilde{O}(nr^{3/4})$-independence query algorithm \cite{Blikstad21}. A crucial ingredient in these exact algorithms is the use of efficient $(1-\eps)$-approximation algorithms: The best-known bounds are $\tilde{O}(n/\eps)$ rank queries \cite{ChakrabartyLS0W19}, $\tilde{O}(n\sqrt{r}/\eps)$ independence queries \cite{Blikstad21}, or $\tilde{O}(n/\eps + r\sqrt{r}/\eps^{3})$ independence 
 queries \cite{Quanrud24}. Virtually all of these efficient algorithms work by identifying ``blocking flows'' in the so-called ``exchange graphs'' (whose structure turns out to be more intricate than its counterparts in, e.g., bipartite matching). 
 We take a completely different ``auctioning'' approach similar to \cite{fujishige1995efficient,ShigenoI95}, which makes our algorithms simpler in comparison (see \cref{alg:auction}).

Parallel matroid intersection has also been studied, see e.g., \cite{ChakrabartyCK21,GhoshGR22,Blikstad22}. There are $\Omega(n^{1/3})$-round lower bounds proven in \cite{KarpUW85} for independence-query algorithms and \cite{ChakrabartyCK21} for rank queries.
Implicit in Edmonds' early work is a trivial $O(n)$-round algorithm. The only previous sublinear-round algorithms are due to \cite{Blikstad22} who gave an $O(n^{3/4})$-round rank-query algorithm and an $O(n^{7/8})$ independence-query algorithm for the unweighted case.
It is worth mentioning that \cite[Section~7]{ChakrabartyLS0W19} implicitly obtained a similar result as ours that reduces (additive) approximate matroid intersection to $O(1/\eps^2)$ rounds of maximum-weight basis of a single matroid via the Frank-Wolfe convex optimization algorithm.
However, their reduction only gives a \emph{fractional} approximate solution.
Since there is no known efficient rounding algorithm for matroid intersection (sequential nor parallel), their result does not imply improved parallel exact algorithms like ours.

\section{Preliminaries} \label{sec:prelim}

As is common in matroid literature, for a set $A$ and elements $a\in A, b\notin A$ we will use $A-a$ and $A+b$ to denote $A\setminus \{a\}$ and $A \cup \{b\}$ respectively.

\paragraph{Matroid.} A \emph{matroid} $\M = (V,\I)$ consists of a ground set $V$ with a notion of independence $\I \subseteq 2^{V}$.
The notion of independence $\I$ must satisfy three properties: (i) $\emptyset \in \I$; (ii) \emph{downwards-closure} meaning that if $S\in \I$ and $S'\subseteq S$, then $S'\in \I$; and (iii) \emph{the exchange property} meaning that if $S, S'\in \I$ and $|S| < |S'|$, then there exists some $x\in S'\setminus S$ so that $S+x\in \I$.

A subset $S\subseteq V$  is called \emph{independent} if $S\in \I$; it is called a \emph{basis} if it is a maximally independent set, that is if $S+x\not\in \I$ for all $x\in V\setminus S$.
The rank of a subset $S\subseteq V$ is the size of the largest independent set inside $S$, denoted by $\rank(S)$.
As a consequence of the exchange property, all bases of a matroid have the same size---$\rank(V)$, also written $\rank(\M)$---which we also call the rank of the matroid $\M$.
When clear from context, we let $n$ denote the size of the ground set $V$.

\paragraph{Maximum Weight Basis.}
Given a weight function $w: V\to \R$ on the 
ground set, the \emph{maximum weight} basis of a matroid $\M$ is the basis $B$ which maximizes $\sum_{e\in B} w(e)$. To find a maximum weight basis of a matroid in the sequential setting, the simple greedy algorithm that scans elements in descending order of weights suffices.
\begin{fact}
  \label{fact:sequential-max-basis}
  Given weights $w(e)$ for each $e \in V$, there is an $O(n)$-independence-query algorithm that finds a $w$-maximum basis in a matroid $\M$.
\end{fact}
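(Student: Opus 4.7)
The plan is to analyze the classical matroid greedy algorithm. First I would describe the procedure: sort the ground set $V$ in non-increasing order of weight $w(e)$, obtaining an ordering $e_1, e_2, \ldots, e_n$, and then scan through the elements in this order maintaining a current set $B$, initially empty. For each $e_i$, I would make one independence query to check whether $B + e_i \in \I$; if yes, I set $B \leftarrow B + e_i$, otherwise I skip $e_i$. The output is the final set $B$.

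For the query bound, the sort is done offline without any queries, and the scan issues exactly one independence query per element of $V$, giving the claimed $O(n)$ bound. Implementing the sort costs $O(n \log n)$ time but no queries.

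For correctness, I would argue in two steps. First, $B$ is a basis of $\M$: by construction $B \in \I$, and if it were not maximal there would exist some $e \in V \setminus B$ with $B + e \in \I$; by downward-closure the set $B_{<i} + e_i \in \I$ at the moment the algorithm considered $e = e_i$, contradicting the fact that we skipped $e_i$. Second, $B$ is of maximum weight: suppose for contradiction that some basis $B^*$ has larger total weight, and order the elements of both $B$ and $B^*$ in non-increasing weight order as $b_1, \ldots, b_r$ and $b^*_1, \ldots, b^*_r$. Let $k$ be the smallest index with $w(b^*_k) > w(b_k)$. Apply the exchange property to the independent sets $\{b_1,\ldots,b_{k-1}\}$ and $\{b^*_1, \ldots, b^*_k\}$ to obtain some $b^*_j$ with $j \le k$ such that $\{b_1,\ldots,b_{k-1}\} + b^*_j \in \I$; by choice of $k$, we have $w(b^*_j) \ge w(b^*_k) > w(b_k)$. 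But then when the greedy scan reached $b^*_j$ (before reaching $b_k$), the current partial basis was a subset of $\{b_1, \ldots, b_{k-1}\}$, so by downward-closure adding $b^*_j$ kept it independent and the algorithm would have selected it, contradicting $b^*_j \notin B$.

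There is essentially no obstacle here: both the query count and the correctness are completely standard textbook facts about matroids, and the fact is cited in the paper only to serve as a black-box primitive. I would state it in a sentence or two and point the reader to any standard matroid reference.
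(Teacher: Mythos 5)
Your proposal is correct and coincides with the paper's treatment: the paper states this fact without proof, citing exactly the standard greedy scan in descending weight order, and your query count and exchange-property argument are the textbook justification. The only nit is that your final contradiction presupposes $b^*_j \notin B$, which the exchange property alone does not give; it follows in one line because if $b^*_j = b_m$ for some $m \ge k$ then $w(b^*_j) \le w(b_k)$, contradicting $w(b^*_j) > w(b_k)$.
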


Finding a basis of a matroid in the \emph{parallel} setting has been studied before in \cite{KarpUW85},
who showed a $1$-round rank-query algorithm and an $\tilde{O}(\sqrt{\rank(\M)})$-round independence-query algorithm, together with an $\Omega(\rank(\M)^{1/3})$-round independence-query lower bound.
It is easy to convert these algorithms to find a \emph{maximum-weight} basis instead, in the same number of rounds.
For completeness we defer a proof sketch to \cref{appendix:basis}.

\begin{restatable}{lemma}{ParallelMaxBasis}
  \label{lemma:parallel-max-basis}
  Given a weight $w(e)$ for each $e \in V$, there is a $1$-round rank-query algorithm and an $O(\sqrt{\rank(\M)}\log(n/\rank(\M)))$-round independence-query algorithm that finds a $w$-maximum basis in a matroid $\M$.
\end{restatable}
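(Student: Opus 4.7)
The plan is to handle the two oracle models separately, both exploiting the classical matroid greedy characterization of the $w$-maximum basis.

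For the rank-query algorithm, I would first sort the elements in decreasing order of weight (with arbitrary tie-breaking) as $e_1, e_2, \ldots, e_n$, and let $S_i \defeq \{e_1, \ldots, e_i\}$ with $S_0 \defeq \emptyset$. In a single round we issue all $n$ rank queries $\rank(S_1), \ldots, \rank(S_n)$ in parallel and output $\{e_i : \rank(S_i) > \rank(S_{i-1})\}$. By submodularity of $\rank$, the condition $\rank(S_i) > \rank(S_{i-1})$ is equivalent to ``$e_i$ is not spanned by $S_{i-1}$,'' which by the matroid greedy theorem (the correctness underlying \cref{fact:sequential-max-basis}) is precisely the set produced by the sequential greedy algorithm on the weight-sorted order, i.e., the $w$-maximum basis.

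For the independence-query algorithm, I would adapt the Karp--Upfal--Wigderson parallel basis algorithm in a weight-aware manner: wherever the unweighted KUW procedure makes an arbitrary choice---which elements to probe in parallel, how to partition the ground set in its divide-and-conquer step, or which prefix of candidates to commit to---always prefer higher-weight elements. The round complexity is unchanged because the algorithmic skeleton is identical to unweighted KUW, while correctness follows from the fact that any execution that greedily processes elements in decreasing weight order produces the $w$-maximum basis. An alternative (and arguably more modular) route is to simulate the one-round rank-query algorithm from above by running, in parallel, $n$ independent copies of unweighted KUW -- one per subset $S_i$, since $\rank(S_i)$ equals the size of a basis of $\M$ restricted to $S_i$.

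The main obstacle is the bookkeeping in the parallel-simulation route: a naive analysis of running KUW on each $S_i$ gives $O(\sqrt{\rank(\M)}\log n)$ rather than the tight $O(\sqrt{\rank(\M)}\log(n/\rank(\M)))$, because $|S_i|/\rank(S_i)$ can a priori be much larger than $n/\rank(\M)$. The weight-aware KUW adaptation sidesteps this issue because the recursion is structured by the ground set $V$ itself, whose rank is uniformly bounded by $\rank(\M)$ and whose recursive partitioning has depth $O(\log(n/\rank(\M)))$. Beyond verifying this tight accounting against the original KUW analysis, everything else is routine.
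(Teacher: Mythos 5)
Your rank-query algorithm is exactly the paper's. For the independence-query half, the paper takes precisely the route you label as the ``alternative'': sort by weight, compute $\rank(V_i)$ for every prefix $V_i$ by running $n$ parallel instances of the unweighted basis-finding algorithm on the restrictions $\M|_{V_i}$, and output $\{e_i : \rank(V_i) > \rank(V_{i-1})\}$. Your stated reason for distrusting this route does not hold up. Writing $r_i = \rank(V_i)$ and $n_i = |V_i|$, the instance on $V_i$ costs $O(\sqrt{r_i}(1+\log(n_i/r_i)))$ rounds, and since $n_i \le n$,
\[
\sqrt{r_i}\,\log\frac{n_i}{r_i} \;\le\; \sqrt{r_i}\,\log\frac{n}{r} + \sqrt{r_i}\,\log\frac{r}{r_i} \;\le\; \sqrt{r}\,\log\frac{n}{r} + O\left(\sqrt{r}\right),
\]
using $r_i \le r$ for the first term and $\max_{1\le x\le r}\sqrt{x}\log(r/x) = O(\sqrt{r})$ for the second. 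So every prefix instance fits within the claimed $O(\sqrt{\rank(\M)}\log(n/\rank(\M)))$ budget (reading the logarithm as $1+\log$, which the paper does implicitly since its unweighted algorithm ends with an extra $O(\sqrt{r})$-round cleanup phase), and no $\log n$ factor is lost.

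By contrast, your preferred ``weight-aware KUW'' route is the one with a genuine gap. The bucketing algorithm discards the $k$-th element of a bucket $F_i$ because it is spanned by $S \cup F_{i,k-1}$; for the weighted greedy this elimination is only valid if the spanning set consists of elements that precede the discarded element in the weight order, and $S$ (as well as $F_{i,k-1}$, unless the buckets are consecutive blocks of the sorted order) will in general contain lower-weight elements. The instruction ``prefer higher-weight elements in every arbitrary choice'' does not by itself repair this, and making it rigorous would require restructuring the algorithm around weight classes, with no clear control on the round count when there are many distinct weights. I recommend dropping that route and presenting the prefix-rank simulation together with the calculation above.
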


\paragraph{Matroid Intersection.} In the matroid intersection problem one is given two matroids $\M_1 = (V,\I_1)$ and $\M_2 = (V,\I_2)$ on the same ground sets but with different notions of independence. The goal is to compute the largest \emph{common independent set} $S\in \I_1\cap \I_2$. In the \emph{weighted} matroid intersection problem, one is additionally given a weight function $w:V\to \mathbb{Z}$, and instead wants to find the common independent set $S$ of maximum weight $\sum_{e\in S}w(e)$.
When discussing matroid intersection, we let $n$ denote $|V|$ and $r$ the size of the answer;\footnote{Sometimes in literature $r$ denotes the rank of the input matroids instead; this is essentially the same as one can truncate the input matroids to have rank roughly the size of answer.}
and for the weighted version $W\defeq \max_{e\in V} |w(e)|$.
We make use of the following standard primal-dual formulation of matroid intersection.

\begin{theorem}[Matroid Intersection Theorem~\cite{edmonds1979matroid}]
\label{thm:matroid-intersection-dual}
For matroids $\M_1=(V,\I_1)$ and $\M_2=(V,\I_2)$ it holds that
$$\max_{S\in \I_1\cap \I_2}|S| = \min_{A,B\subseteq V, A\cup B = V} \rank_1(A) + \rank_2(B).$$
\end{theorem}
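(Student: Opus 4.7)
The plan is to prove both inequalities $\max \leq \min$ and $\max \geq \min$ separately. Weak duality is immediate: for any $S \in \I_1 \cap \I_2$ and any cover $A \cup B = V$, since $S \setminus A \subseteq B$ we can write $|S| = |S \cap A| + |S \setminus A| \leq \rank_1(A) + \rank_2(B)$, using downwards-closure to conclude that $S \cap A \in \I_1$ and $S \setminus A \in \I_2$. This gives $\max \leq \min$ with no matroid structure beyond the axioms.

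For the harder direction, I would fix a maximum common independent set $S^*$ with $|S^*| = r$ and build the \emph{exchange graph} $D$ on vertex set $V$: insert an arc $x \to y$ (for $x \in S^*$, $y \notin S^*$) whenever $S^* - x + y \in \I_1$, and an arc $y \to x$ whenever $S^* - x + y \in \I_2$. Designate sources $X_1 = \{y \notin S^* : S^* + y \in \I_1\}$ and sinks $X_2 = \{y \notin S^* : S^* + y \in \I_2\}$, and let $Z \subseteq V$ be the vertices reachable from $X_1$ in $D$. The classical \emph{augmenting path lemma} (which takes a \emph{shortest} $X_1$-to-$X_2$ path and toggles $S^*$ along it) implies that an $X_1$-to-$X_2$ path would yield a common independent set of size $r+1$, contradicting maximality of $S^*$. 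Hence $X_2 \cap Z = \emptyset$.

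Now set $A = V \setminus Z$ and $B = Z$, so $A \cup B = V$. The key claim is that $\rank_1(A) = |S^* \cap A|$ and $\rank_2(B) = |S^* \cap B|$. For the first, take any $y \in A \setminus S^*$; since $X_1 \subseteq Z$ we have $y \notin X_1$, so $S^* + y \notin \I_1$ and thus contains a unique $\M_1$-circuit $C$. For each $x \in C - y$, $S^* - x + y \in \I_1$ gives an arc $x \to y$ in $D$; because $y \notin Z$, also $x \notin Z$, i.e.\ $x \in A$. Therefore $C \subseteq (S^* \cap A) + y$, so $(S^* \cap A) + y \notin \I_1$, and $S^* \cap A$ is an $\M_1$-basis of $A$. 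A symmetric argument handles $B$ and $\M_2$: any $y \in B \setminus S^*$ has $y \notin X_2$, so $S^* + y$ contains a unique $\M_2$-circuit $C$; each $x \in C-y$ receives an arc $y \to x$ in $D$, and reachability propagates $x$ into $Z = B$. Summing, $\rank_1(A) + \rank_2(B) = |S^* \cap A| + |S^* \cap B| = r$, matching weak duality and finishing the proof.

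The main obstacle is the augmenting path lemma itself: unlike bipartite matching, an arbitrary $X_1$-to-$X_2$ path in $D$ need not yield an independent set upon toggling, because the matroid exchange arcs only record \emph{individual} single-element swaps and multiple simultaneous swaps can introduce circuits. The standard fix is to pick a \emph{shortest} augmenting path and invoke a no-shortcut / unique-circuit argument in each matroid to show the toggled set lies in $\I_1 \cap \I_2$. This shortest-path subtlety is the classical technical crux, and notably the structural feature that all efficient matroid intersection algorithms (e.g.\ those referenced in the related work) must grapple with.
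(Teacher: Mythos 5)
The paper does not prove this statement at all---it is quoted as the classical min--max theorem of Edmonds with a citation, and is used purely as a black box (e.g.\ to derive \cref{lemma:meta} from \cref{lemma:dual}). So there is no in-paper proof to compare against; judging your argument on its own terms, it is the standard and correct exchange-graph proof. Weak duality is right: $S\cap A\in\I_1$ and $S\setminus A\in\I_2\,$ by downward closure, and $S\setminus A\subseteq B$ since $A\cup B=V$. For strong duality, your certificate $A=V\setminus Z$, $B=Z$ works exactly as you say: for $y\in A\setminus S^*$ the fundamental $\M_1$-circuit of $y$ in $S^*+y$ sends arcs $x\to y$ from each of its other elements, and unreachability of $y$ forces those elements into $A$, so $S^*\cap A$ is a maximal (hence maximum, by the matroid axioms) independent subset of $A$; the symmetric argument in $\M_2$ pushes the circuit of $y\in Z\setminus S^*$ into $Z$. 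Since $A$ and $B$ partition $V$, the two ranks sum to $|S^*|=r$, matching weak duality. The one genuinely nontrivial ingredient you invoke without proof is the augmenting-path lemma (needed only to conclude $X_2\cap Z=\emptyset$ from maximality of $S^*$); you correctly identify it as the crux and correctly describe its standard proof via shortest paths and the no-shortcut/unique-circuit argument, so I would not call this a gap so much as a deferred classical lemma. One presentational nitpick: since your $A$ and $B$ are disjoint, it is worth saying explicitly that $|S^*\cap A|+|S^*\cap B|=|S^*|$ uses that they partition $V$ (the theorem only requires a cover, but your certificate happens to be a partition, which is fine).
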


\paragraph{Augmentation.}
In the parallel setting, the well-known \emph{exchange graph} can be constructed in a single round of $O(nr)$ many independence (or rank) queries.
Such a graph allows us to augment a common independent set, increasing its size by one.

\begin{fact}
  \label{fact:augment}
  Given an $S \in \I_1 \cap \I_2$, in a single round of independence or rank queries, we can compute an $S^\prime \in \I_1 \cap \I_2$ of size $|S^\prime| = |S| + 1$ or decide that $S$ is of maximum possible size.
\end{fact}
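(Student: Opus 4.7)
The plan is to use the standard exchange-graph construction from the matroid intersection literature, observing that all queries needed to build the graph are independent of each other and hence can be batched into a single round.

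First I would, given the current $S \in \I_1 \cap \I_2$, define the exchange graph $G(S)$ with vertex set $V$ and directed edges $y \to x$ whenever $y \in S$, $x \notin S$ and $S - y + x \in \I_1$, and $x \to y$ whenever $y \in S$, $x \notin S$ and $S - y + x \in \I_2$. Let the source set be $X_1 = \{x \notin S : S + x \in \I_1\}$ and the sink set be $X_2 = \{x \notin S : S + x \in \I_2\}$. Each edge and each membership in $X_1, X_2$ is determined by a single independence query (or by a single rank query, using $\rank_i(S-y+x) = |S|+[\,\cdot\,]$); all such queries depend only on $S$ and not on each other, so the entire graph together with $X_1, X_2$ can be constructed in one parallel round using $O(nr)$ queries to either oracle. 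No further queries are needed after this round, since once $G(S)$, $X_1$ and $X_2$ are in hand the remaining steps are purely combinatorial.

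Next I would check, locally, whether there is a directed path in $G(S)$ from $X_1$ to $X_2$. If a \emph{shortest} such path $x_0, y_1, x_1, y_2, \ldots, y_k, x_k$ exists (with $x_i \notin S$ and $y_i \in S$), I would augment by setting $S' = S \triangle \{x_0, y_1, x_1, \ldots, y_k, x_k\}$; the standard shortest-augmenting-path lemma for matroid intersection (see e.g.~\cite{cunningham1986improved,BlikstadBMN21}) guarantees $S' \in \I_1 \cap \I_2$ and $|S'| = |S|+1$. If instead no $X_1$-to-$X_2$ path exists, I would certify optimality by taking $A$ to be the set of vertices in $V$ that can reach $X_2$ in $G(S)$ and $B = V \setminus A$; by the Matroid Intersection Theorem (\cref{thm:matroid-intersection-dual}) together with the fact that $\rank_1(A) + \rank_2(B) = |S|$ in this configuration, $S$ is of maximum possible size.

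The only place that requires actual care is the shortest-augmenting-path lemma, which needs the ``no chord'' property of shortest paths in $G(S)$ to conclude that the symmetric-difference update stays independent in both matroids; but this is a classical and purely combinatorial fact that is not an obstacle here. Everything else—batching the queries into one round, local graph construction, reachability, and exhibiting the dual certificate—is immediate once the exchange graph is on the table.
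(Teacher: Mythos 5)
Your proposal is correct and is precisely what the paper intends: \cref{fact:augment} is stated without proof, the preceding paragraph only noting that the well-known exchange graph can be built in a single round of $O(nr)$ queries, after which the shortest-augmenting-path step or the dual certificate is query-free. (One cosmetic slip: with your edge orientation, the standard certificate is $\rank_1(V\setminus U)+\rank_2(U)=|S|$ where $U$ is the set of vertices that can reach $X_2$, so the roles of $A$ and $B$ should be swapped relative to what you wrote; this does not affect the argument.)
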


\section{The Auction Algorithm}

Both our parallel and sequential algorithms are obtained by a more efficient implementation of the auction algorithm originally proposed by \cite{fujishige1995efficient,ShigenoI95} for weighted matroid intersection algorithms.
The auction algorithm assigns two weights $w_1(e)$ and $w_2(e)$ to each element $e\in V$, arbitrarily initialized to $0$.
The algorithm maintains a $w_1$-maximum basis $S_1$ in the first matroid $\M_1$ and a $w_2$-maximum basis $S_2$ in the second matroid $\M_2$. Note that $S \defeq S_1\cap S_2$ is a common independent set, so if $S$ is large, we are done. Otherwise, there must be many ``problematic'' elements in $S_1\setminus S_2$.
For each such problematic element $e\in S_1\setminus S_2$, we do one of two things: (i) discourage $e$ from being in $S_1$ by decreasing $w_1(e)$ by $1$, or (ii) encourage $e$ to be in $S_2$ by increasing $w_2(e)$ by $1$. The hope is that $e$ ceases to be problematic and that $S_1\setminus S_2$ eventually becomes small.
We alternate between adjusting $w_1$ and $w_2$,  which guarantees that $w_1(e) + w_2(e)$ is always either $0$ or $1$.
Each time one of $e$'s weights is adjusted we increase its ``price'' $p(e)$, and later argue that once the price exceeds $2/\eps$, we may simply ignore $e$.\footnote{One may also interpret this as a push-relabel-style algorithm, where the weight of an element corresponds to its label.} 

Our key modification to the auction algorithm is that we process all the weight adjustments of the currently problematic elements as a batch, instead of handling them one at a time. As long as there are at least $\Delta$ (a parameter to the algorithm; think of it as being roughly $\eps n$) remaining problematic elements, we increase the price of at least $\Delta$ elements. This cannot happen too many times, and we can afford to completely recompute $S_1$ and $S_2$ from scratch in each round. On the other hand, if there are fewer than $\Delta$ problematic elements we stop early and prove that this only incurs an additional additive $\Delta$ approximation error. The pseudocode is in \cref{alg:auction}.

\begin{algorithm}[!htb]
  \caption{Auction algorithm for matroid intersection} \label{alg:auction}
  
  \SetEndCharOfAlgoLine{}

  \SetKwInput{KwData}{Input}
  \SetKwInput{KwResult}{Output}
  \SetKwInOut{State}{global}
  \SetKwProg{KwProc}{function}{}{}

  \KwData{Two matroids $\M_1$ and $\M_2$; parameters $\eps\in (0,1)$ and $\Delta > 0$.}
  \KwResult{A common independent set $S$ of size $|S| \ge r - (\eps r + \Delta)$.}

  $p(e), w_1(e), w_2(e) \gets 0$ for all $e \in V$.\; \label{line:init}
  $S_1 \gets$ a basis in $\M_1$ and $S_2 \gets$ a basis in $\M_2$.\;
  \While{true} { \label{line:iter}
    $X \gets \{e \in S_1 \setminus S_2: p(e) < 2/\eps\}$. \label{line:X}\;
    \lIf{$|X| < \Delta$} {
      \textbf{return} $S \defeq S_1 \cap S_2$.
      \label{line:return}
    }
    \For{$e \in X$} {
      $p(e) \gets p(e) + 1$.\;
      \lIf{$w_1(e) + w_2(e) = 0$} {
          $w_2(e) \gets w_2(e) + 1$.\label{line:adjust-1}
      }
      \lElse{
        $w_1(e) \gets w_1(e) - 1$. \label{line:adjust-2}
      }
    }
    \For{$i \in \{1, 2\}$} {
      $S_i \gets$ a $w_i$-maximum basis in $\M_i$\; \label{line:find-basis}
      (breaking ties by preferring elements that were in the old $S_i$).\;
    }
  }
\end{algorithm}

  The following claim is useful for establishing the correctness of our algorithm.
  Similar claims appeared also in \cite{fujishige1995efficient,ShigenoI95}.
  \begin{claim}
    \label{claim:s2-minus-s1}
    We have $p(S_2 \setminus S_1) = 0$ after each iteration of the algorithm.
    In particular, for all $e\in S_2\setminus S_1$, we have $w_2(e) = 0$.
  \end{claim}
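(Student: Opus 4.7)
The plan is to induct on the iteration count. First, I would establish a short side fact about the weight dynamics by induction on the number of times an element has appeared in $X$: since $w_1$ is only ever decremented (case~(ii)) and $w_2$ only ever incremented (case~(i)), we always have $w_1(e)\le 0\le w_2(e)$, $w_1(e)+w_2(e)\in\{0,1\}$, and in particular $w_2(e)\ge 1$ if and only if $p(e)\ge 1$. This reduces the ``in particular'' clause of the statement to the first half, so it suffices to show that $e\in S_2$ with $w_2(e)\ge 1$ implies $e\in S_1$ at the end of each iteration.

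Writing $S_i^{\mathrm{old}}, w_i^{\mathrm{old}}$ and $S_i^{\mathrm{new}}, w_i^{\mathrm{new}}$ for the state before and after an iteration (the base case is trivial since initially $p\equiv 0$), I would split the inductive step on whether $e\in S_2^{\mathrm{new}}$ with $w_2^{\mathrm{new}}(e)\ge 1$ was already in $S_2^{\mathrm{old}}$. If $e\in S_2^{\mathrm{old}}$, then $X\subseteq S_1^{\mathrm{old}}\setminus S_2^{\mathrm{old}}$ forces $e\notin X$, so both $w_i(e)$ are unchanged and the inductive hypothesis gives $e\in S_1^{\mathrm{old}}$. To propagate this to $S_1^{\mathrm{new}}$, suppose $e\notin S_1^{\mathrm{new}}$ and obtain $f\in S_1^{\mathrm{new}}\setminus S_1^{\mathrm{old}}$ via symmetric basis exchange with both $S_1^{\mathrm{old}}-e+f$ and $S_1^{\mathrm{new}}-f+e$ in $\I_1$. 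Chaining the optimality of $S_1^{\mathrm{old}}$ (giving $w_1^{\mathrm{old}}(e)\ge w_1^{\mathrm{old}}(f)$), the optimality of $S_1^{\mathrm{new}}$ (giving $w_1^{\mathrm{new}}(f)\ge w_1^{\mathrm{new}}(e)$), the monotonicity $w_1^{\mathrm{new}}(f)\le w_1^{\mathrm{old}}(f)$, and $w_1^{\mathrm{new}}(e)=w_1^{\mathrm{old}}(e)$ forces $w_1^{\mathrm{new}}(e)=w_1^{\mathrm{new}}(f)$; the tie-breaking rule (preferring $S_1^{\mathrm{old}}$) then picks $e$ over $f$, contradicting $f\in S_1^{\mathrm{new}}\setminus S_1^{\mathrm{old}}$.

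If instead $e\notin S_2^{\mathrm{old}}$, I would first rule out the subcase $w_2^{\mathrm{new}}(e)=w_2^{\mathrm{old}}(e)$ (i.e., $e\notin X$ or $e\in X$ with case~(ii) applied) by the mirror-image argument on the $\M_2$-side: symmetric basis exchange on $S_2^{\mathrm{old}},S_2^{\mathrm{new}}$ produces $g\in S_2^{\mathrm{old}}\setminus S_2^{\mathrm{new}}$, necessarily outside $X$, and combining the optimality of the two bases with $w_2^{\mathrm{new}}(g)=w_2^{\mathrm{old}}(g)$ forces $w_2^{\mathrm{new}}(e)=w_2^{\mathrm{new}}(g)$---whereupon the tie-break preferring $g\in S_2^{\mathrm{old}}$ over $e\notin S_2^{\mathrm{old}}$ contradicts $e\in S_2^{\mathrm{new}}$. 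So we must be in the remaining subcase $e\in X$ with case~(i) applied, which gives $e\in S_1^{\mathrm{old}}$ (since $X\subseteq S_1^{\mathrm{old}}$) and $w_1^{\mathrm{new}}(e)=w_1^{\mathrm{old}}(e)$, and the same $\M_1$-side argument from the previous case yields $e\in S_1^{\mathrm{new}}$.

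The main obstacle, as far as I can tell, is not the matroid theory but the bookkeeping around the tie-breaking rule: without preferring the old basis in equal-weight exchanges, the optimality-plus-monotonicity chains above would terminate in weight equalities rather than contradictions, and the invariant would fail to propagate from one iteration to the next.
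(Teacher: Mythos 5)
Your proof is correct, and at its core it runs on the same fuel as the paper's: the tie-breaking rule plus the monotonicity of the weight updates ($w_1$ only decreases, $w_2$ only increases) controls which elements can leave $S_1$ or enter $S_2$ when the bases are recomputed. The organization differs, though. The paper proves the stronger structural statement that $S_2\setminus S_1$ is \emph{decremental}: it asserts, citing "standard exchange properties," that $S_1^{\mathrm{new}}=(S_1^{\mathrm{old}}\setminus X_1')\cup Y_1$ and $S_2^{\mathrm{new}}=(S_2^{\mathrm{old}}\setminus Y_2)\cup X_2'$ with $X_1'\subseteq X_1$ and $X_2'\subseteq X_2$ (only elements whose relevant weight actually changed can be dropped from $S_1$ or picked up by $S_2$), and then does a four-way case analysis showing no element moves into $S_2\setminus S_1$. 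You instead induct directly on the claim itself (via the equivalent reformulation $e\in S_2,\ w_2(e)\ge 1\Rightarrow e\in S_1$, which your weight-dynamics side fact correctly justifies) and substantiate each step with an explicit symmetric-basis-exchange argument: the chain $w_1^{\mathrm{old}}(e)\ge w_1^{\mathrm{old}}(f)\ge w_1^{\mathrm{new}}(f)\ge w_1^{\mathrm{new}}(e)=w_1^{\mathrm{old}}(e)$ collapsing to equalities, after which the tie-break (formally: optimality with respect to the perturbed weights $w_i+\tfrac{\eps}{2}\mathbbm{1}[\cdot\in S_i^{\mathrm{old}}]$) yields the contradiction. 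What your route buys is that it fills in exactly the exchange-theoretic details the paper waves at, at the cost of a three-case analysis per element rather than one global decomposition; what the paper's route buys is the reusable decrementality statement and a shorter write-up. Your correct identification that without the tie-breaking the argument degenerates into unexploitable weight equalities is precisely the point of the paper's parenthetical on the basis-recomputation line.
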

  \begin{proof}
    We are going to show that $S_2 \setminus S_1$ is a decremental set, i.e., we only move elements out of $S_2 \setminus S_1$ but never into.
    Consider a single phase of adjustments.
    Let $X_2 \subseteq X$ be elements with $w_1(e) + w_2(e) = 0$ (for which we will increase the $w_2$-weights), and let $X_1 \defeq X \setminus X_2$ be those we will decrease their $w_1$-weights.
    Let us for clarity denote by $S_i^{\mathrm{new}}$ the basis $S_i$ after Line~\ref{line:find-basis} and by $S_i^{\mathrm{old}}$ the one before it.
    By the tie-breaking, $S_i^{\mathrm{new}}$ prefers elements that were in $S_i^{\mathrm{old}}$, so it follows from standard exchange properties of matroids that $S_1^{\mathrm{new}} = (S_1^{\mathrm{old}} \setminus X_1^\prime) \cup Y_1$ and $S_2^{\mathrm{new}} = (S_2^{\mathrm{old}} \setminus Y_2) \cup X_2^\prime$ for some $X_1^\prime \subseteq X_1$, $Y_1 \subseteq V \setminus S_1^{\mathrm{old}}$, $X_2^\prime \subseteq X_2$, and $Y_2 \subseteq S_2^{\mathrm{old}}$.
    As alluded to in the beginning, we will show that no elements ``move'' to $S_2 \setminus S_1$ when we go from $(S_1^{\mathrm{old}}, S_2^{\mathrm{old}})$ to $(S_1^{\mathrm{new}}, S_2^{\mathrm{new}})$.
    Note that as $X_2^\prime$ and $X_1^\prime$ are disjoint, all elements in $X_2^\prime$ have their $w_1$-weight intact and thus stay in $S_1$.
    Similarly, all elements in $X_1^\prime$ have their $w_2$-weight intact and thus stay outside $S_2$.
    We can now do a case analysis:
    (1) $X_1^\prime$ moves to $V \setminus (S_1^{\mathrm{new}} \cup S_2^{\mathrm{new}})$; 
    (2) $Y_1$ moves to $S_1^{\mathrm{new}}$;
    (3) $X_2^\prime$ moves to $S_2^{\mathrm{new}} \cap S_1^{\mathrm{new}}$;
    (4) $Y_2$ moves to $V \setminus S_2^{\mathrm{new}}$.
    This proves the claim.
  \end{proof}

We can now lower bound the size of our output $S$.
\begin{lemma} \label{lemma:meta}
  Given $\eps$ and $\Delta$, \cref{alg:auction} outputs an $S \in 
\I_1\cap \I_2$ of size $|S| \geq r - (\eps r + \Delta)$.
\end{lemma}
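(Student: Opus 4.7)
The plan is to apply the matroid intersection duality theorem (\cref{thm:matroid-intersection-dual}): it suffices to exhibit a cover $A \cup B = V$ with $\rank_1(A) + \rank_2(B) \le |S| + \eps r + \Delta$, since combining this with the lower bound $r \le \rank_1(A) + \rank_2(B)$ yields the claim.

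I will pick $(A,B)$ from a parametric family indexed by $t \ge 0$:
\[
A_t \defeq \{e \in V : w_1(e) \ge -t\}, \qquad B_t \defeq \{e \in V : w_2(e) \ge t+1\}.
\]
Because the algorithm maintains $w_1(e) \le 0 \le w_2(e)$ and $w_1(e) + w_2(e) \in \{0,1\}$, any $e \notin A_t$ has $w_1(e) \le -t-1$ and hence $w_2(e) \ge t+1$, so $e \in B_t$. Thus $A_t \cup B_t = V$, and one checks $A_t \cap B_t = \{e : p(e) = 2t+1\}$. Since $A_t$ (resp.\ $B_t$) is a weight-threshold set strictly separated from its complement in $w_1$ (resp.\ $w_2$), the greedy structure of a maximum-weight basis gives $|S_1 \cap A_t| = \rank_1(A_t)$ and $|S_2 \cap B_t| = \rank_2(B_t)$. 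Combined with \cref{claim:s2-minus-s1}---which implies $(S_2 \setminus S_1) \cap B_t = \emptyset$, since every element of $S_2 \setminus S_1$ has $p = 0$ and thus $w_2 = 0$---a short inclusion-exclusion calculation yields
\[
\rank_1(A_t) + \rank_2(B_t) \;=\; |S| \;+\; |S \cap \{e : p(e) = 2t+1\}| \;+\; |(S_1 \setminus S_2) \cap A_t|.
\]

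I would then control the two error terms using the termination condition and an averaging argument. First, whenever $2t + 1 < 2/\eps$ we have $A_t \subseteq \{e : p(e) < 2/\eps\}$, so $(S_1 \setminus S_2) \cap A_t \subseteq X$ and this term is at most $|X| < \Delta$ by the termination condition. Second, let $T$ denote the number of non-negative integers $t$ with $2t + 1 < 2/\eps$; a direct count gives $T+1 \geq 1/\eps$. The level sets $\{e : p(e) = 2t+1\}$ for $t = 0, \ldots, T-1$ are pairwise disjoint, so $\sum_{t=0}^{T-1} |S \cap \{p = 2t+1\}| \le |S|$. Averaging produces some $t$ in the valid range with $|S \cap \{p = 2t+1\}| \le |S|/T$. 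Plugging this $t$ into the displayed equation and invoking matroid intersection duality yields $r \le |S|(T+1)/T + \Delta$, which rearranges to $r - |S| \le r/(T+1) + \Delta \le \eps r + \Delta$.

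The main conceptual step is guessing the parametric dual $(A_t, B_t)$ for which the max-weight-basis greedy structure exactly computes both ranks and \cref{claim:s2-minus-s1} kills the $(S_2 \setminus S_1)$ contribution; everything after this---the termination bound for the $S_1 \setminus S_2$ term, pigeonhole on the $\{p = 2t+1\}$ level sets, and the final algebraic rearrangement---is routine bookkeeping.
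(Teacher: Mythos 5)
Your proposal is correct and follows essentially the same route as the paper's proof of \cref{lemma:dual}: the same parametric dual pair $(A_t,B_t)$, the same use of the max-weight-basis threshold property and \cref{claim:s2-minus-s1}, the same bound $|(S_1\setminus S_2)\cap A_t| \le |X| < \Delta$, and an averaging argument over $t$ (you average over the level sets $\{p=2t+1\}$ rather than $\{e\in S: w_1(e)=-t\}$, but this is only a cosmetic difference). The bookkeeping checks out, so no further changes are needed.
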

There are many ways to prove \cref{lemma:meta}; we will do so 
by showing the following stronger \cref{lemma:dual} which explicitly extracts a dual solution from the algorithm (we will also need this dual solution later in \cref{sec:mult-approx}). \cref{lemma:meta} then follows directly from \cref{lemma:dual} together with
the matroid intersection theorem (\cref{thm:matroid-intersection-dual}).
  
\begin{lemma}
\label{lemma:dual}
  After running \cref{alg:auction}, we can compute in $O(n)$ time (without additional queries) two sets $A, B \subseteq V$ such that (i) $A \cup B = V$ and (ii) $\rank_1(A) + \rank_2(B) \leq |S| + (\eps r + \Delta)$.
\end{lemma}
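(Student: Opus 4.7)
The plan is to read off the dual $(A, B)$ directly from the final weights $w_1, w_2$, parameterizing the two sets by a threshold $t$ and then selecting the best threshold by an averaging argument. Concretely, for each integer $t \geq 0$ I define
\[
  A_t \defeq \{e \in V : w_1(e) \geq -t\}
  \quad\text{and}\quad
  B_t \defeq \{e \in V : w_2(e) \geq t+1\}.
\]
Lines~\ref{line:adjust-1}--\ref{line:adjust-2} of \cref{alg:auction} maintain the invariant $w_1(e)+w_2(e)\in\{0,1\}$, which immediately yields $A_t\cup B_t=V$: whenever $w_1(e)<-t$, we must have $w_2(e)\geq -w_1(e)\geq t+1$.

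The next step is to turn the two rank terms into intersection sizes. Since $S_i$ is a $w_i$-maximum basis of $\M_i$, the standard greedy characterization of maximum-weight bases gives $\rank_1(A_t)=|S_1\cap A_t|$ and $\rank_2(B_t)=|S_2\cap B_t|$. By \cref{claim:s2-minus-s1}, every $e\in S_2\setminus S_1$ has $w_2(e)=0<t+1$, so $S_2\cap B_t=S\cap B_t$. Splitting $S_1=S\sqcup(S_1\setminus S_2)$ and using $A_t\cup B_t=V$ to write $|S\cap A_t|+|S\cap B_t|=|S|+|S\cap A_t\cap B_t|$, I arrive at the clean identity
\[
  \rank_1(A_t)+\rank_2(B_t)=|S|+|(S_1\setminus S_2)\cap A_t|+|S\cap A_t\cap B_t|.
\]

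The heart of the proof is to bound the two slack terms in aggregate by averaging over $t\in\{0,1,\dots,T-1\}$ with $T\approx 1/\eps$. Each $e\in S$ lies in $A_t\cap B_t$ only when $-w_1(e)=t$ and $w_2(e)=t+1$ (since $w_2(e)\leq-w_1(e)+1$ forces $t+1\leq w_2(e)\leq-w_1(e)+1\leq t+1$), so it is counted at most once, giving $\sum_t|S\cap A_t\cap B_t|\leq r$. For $e\in S_1\setminus S_2$ at termination, either $p(e)\geq 2/\eps$---which combined with $p(e)\in\{-2w_1(e),-2w_1(e)+1\}$ forces $-w_1(e)\geq T$ and hence $e\notin A_t$ for all $t<T$---or $e\in X$, and since $|X|<\Delta$ at termination such elements contribute at most $T\Delta$ to the sum in total. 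Averaging picks some $t^\ast$ with $\rank_1(A_{t^\ast})+\rank_2(B_{t^\ast})\leq|S|+r/T+\Delta\leq|S|+\eps r+\Delta$, and $t^\ast$ is identified in $O(n)$ additional time by precomputing histograms of $-w_1(e)$ over $S_1$ and of $w_2(e)$ over $S_2$. The one subtlety is matching the real cutoff $2/\eps$ with the integer $T$, which costs at most a constant factor that can be absorbed into the lemma's $\eps$.
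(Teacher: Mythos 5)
Your proposal is correct and follows essentially the same route as the paper's proof: the same threshold sets $A_t,B_t$, the same use of the greedy characterization of $w_i$-maximum bases and of \cref{claim:s2-minus-s1}, and the same averaging over $t\in\{0,\dots,O(1/\eps)\}$ (the paper averages only the $|S\cap A_t\cap B_t|$ term and bounds $(S_1\setminus S_2)\cap A_t\subseteq X$ uniformly, whereas you fold both slack terms into the average --- an immaterial difference). Your explicit accounting of the $p(e)$--$w_1(e)$ relation and the integrality of the cutoff is, if anything, slightly more careful than the paper's.
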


\begin{proof}
  Let $X$ be the set defined on Line~\ref{line:X} in \cref{alg:auction} when the algorithm returns a common independent set $S \defeq S_1 \cap S_2$ on Line~\ref{line:return}.
  We choose $t \in \{0, 1, \ldots, 1/\eps - 1\}$ minimizing the size of the set $S^{(t)} \defeq \{e \in S: w_1(e) = -t\}$.
  Since there are $1/\eps$ choices of $t$ and $|S|\le r$, we have $|S^{(t)}| \leq \eps r$.
  We then let $A \defeq \{v \in V: w_1(v) \geq -t\}$ and $B \defeq \{v \in V: w_2(v) \geq 1 + t\}$ be the output of this lemma.
  
  We first show that $A \cup B = V$.
  Indeed, $w_1(e) + w_2(e)\in\{0, 1\}$ for all $e\in V$, so either $w_1(e)\ge -t$ (and $e\in A$), or $w_1(e) \le -t-1$ in which case $w_2(e)\ge -w_1(e) \ge 1 + t$ (and $e\in B$).
  
  We now show that $\rank_1(A) + \rank_2(B) \leq |S| + \eps r + \Delta$.
  Since $S_1$ is a $w_1$-maximum basis in $\M_1$ and $A$ consists of all elements with $w_1$-weight at least some threshold, we know that $S_1\cap A$ is a largest independent set (of $\M_1$) inside $A$. Hence $\rank_1(A) = |S_1\cap A|$, and by a symmetric argument, $\rank_2(B) = |S_2 \cap B|$.
  This means that $\rank_1(A) + \rank_2(B) = |S_1\cap A| + |S_2\cap B| = |S\cap A| + |(S_1\setminus S_2)\cap A| + |S\cap B| + |(S_2\setminus S_1)\cap B|$.
  In fact, we have $(S_2\setminus S_1)\cap B = \emptyset$, since
  $w_2(e)>0$ for each element $e\in B$, and $w_2(e') = 0$ for each element $e'\in S_2\setminus S_1$ by \cref{claim:s2-minus-s1}. Moreover, $(S_1\setminus S_2)\cap A\subseteq X$, 
  since $w_1(e) \geq -(1/\eps) + 1$ for each element $e\in A$, and, by construction, $X$ contains exactly the elements of $S_1\setminus S_2$ with $w_1$-weight larger than $-(1/\eps-1)$.

  Hence,
  $\rank_1(A) + \rank_2(B) \le |S\cap A| + |S\cap B| + |X| = |S| + |S\cap A\cap B| + |X|$. We know that $|X|<\Delta$ at the end of \cref{alg:auction}. We also know that $e \in S\cap A\cap B$ implies that $w_1(e) \ge -t$ and $w_2(e)\ge 1+t$; at the same time $w_1(e)+w_2(e)\le 1$ so it must be the case that all these three inequalities are tight. In particular $w_1(e) = -t$ and $e\in S$, so we can say $S\cap A\cap B \subseteq S^{(t)}$, where $|S^{(t)}|\le \eps r$.
  Concluding, we have $\rank_1(A)+\rank_2(B)\le |S| + \eps r + \Delta$.
\end{proof}

\paragraph{Running Time.} We now argue that the loop in \cref{alg:auction} does not run so many times. For example, if $\Delta = \eps n$, the following lemma shows it only runs $O(1/\eps^{2})$ times.

\begin{lemma} \label{lemma:iter}
  Given $\eps$ and $\Delta$, the while loop on Line~\ref{line:iter} in \cref{alg:auction} has at most $O\left(\frac{n}{\eps \Delta}\right)$ iterations.
\end{lemma}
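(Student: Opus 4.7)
The plan is to use a simple potential/amortization argument based on the total sum of prices $\sum_{e \in V} p(e)$. The key observation is that prices only ever increase, never decrease, and no element can have its price grow past the threshold $2/\eps$ (because any element $e$ with $p(e) \geq 2/\eps$ is excluded from $X$ on Line~\ref{line:X} and therefore skipped in that iteration's batch update).

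First, I would bound the total number of price increments over the entire execution. Since $p(e)$ starts at $0$, only increases, and is only incremented in iterations where $e \in X$ (which requires $p(e) < 2/\eps$ at the time), the final value of $p(e)$ for any single element is at most $2/\eps$. Summing over all $n$ elements in the ground set gives a global cap of
\[
\sum_{e \in V} p(e) \;\leq\; n \cdot \frac{2}{\eps} \;=\; O\!\left(\frac{n}{\eps}\right)
\]
on the total number of increments performed across the whole algorithm.

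Next, I would lower bound the number of increments per iteration. By the guard on the previous line, whenever the algorithm does not return, we have $|X| \geq \Delta$, and the \textbf{for} loop over $e \in X$ increments $p(e)$ by one for each such element. Hence every executed iteration contributes at least $\Delta$ to $\sum_e p(e)$. Combining this with the global cap gives at most $O(n/\eps)/\Delta = O(n/(\eps\Delta))$ iterations, as claimed.

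I do not anticipate a real obstacle here; the only thing to double-check is that the price $p(e)$ is genuinely monotone and that the weight-adjustment branching (Lines~\ref{line:adjust-1}--\ref{line:adjust-2}) never inadvertently allows an element to be processed with $p(e) \geq 2/\eps$, both of which are immediate from the code since $X$ is defined to exclude such elements and $p(e)$ appears only on the right-hand side of a $+1$ update.
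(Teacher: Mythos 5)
Your proposal is correct and uses exactly the same potential argument as the paper: the total price $\sum_{e\in V}p(e)$ is monotone, capped at $O(n/\eps)$ because elements with $p(e)\ge 2/\eps$ are excluded from $X$, and increases by $|X|\ge\Delta$ in every non-terminating iteration. No gaps.
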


\begin{proof}
  Let $\Phi \defeq \sum_{v \in V}p(v)$  be a potential which is initially $0$ and at most $2n/\eps$.
  Note that we increase $\Phi$ by exactly $|X|$ in each iteration, and since the algorithm terminates whenever $|X| < \Delta$, the potential increase per iteration is at least $\Delta$.
  This shows that there are at most $O\left(\frac{n}{\eps \Delta}\right)$ iterations.
\end{proof}

\subsection{A Simple Independence-Query Approximation Algorithm} \label{sec:additive}

We begin with a sequential implementation of our algorithm in the independence query model. By \cref{fact:sequential-max-basis}, a maximum weight basis can be found using a simple greedy algorithm with $O(n)$ independence queries.\footnote{Note that \cref{alg:auction} needs tie-breaking by preferring elements that were in the old basis. This can be done by slightly increasing the weight (say by $\eps/2$) of the preferred elements.} If we plug this into \cref{alg:auction}, we get the following running time.

\begin{corollary}
  One can implement \cref{alg:auction} using $O\left(\frac{n^2}{\eps\Delta}\right)$ independence queries.
\end{corollary}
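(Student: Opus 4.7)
The plan is to combine the iteration bound from \cref{lemma:iter} with the per-iteration cost coming from \cref{fact:sequential-max-basis}. By \cref{lemma:iter}, the while loop on Line~\ref{line:iter} runs at most $O(n/(\eps \Delta))$ times, so it suffices to show that each iteration can be carried out using $O(n)$ independence queries.

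First I would observe that the only step inside the loop that touches the oracle is Line~\ref{line:find-basis}, where we recompute a $w_i$-maximum basis $S_i$ of $\M_i$ for $i\in\{1,2\}$. Indeed, forming the set $X$ on Line~\ref{line:X}, updating the prices $p(e)$, and adjusting the weights $w_1, w_2$ on Lines~\ref{line:adjust-1}--\ref{line:adjust-2} are all bookkeeping operations on already-stored quantities and cost $O(n)$ time but zero queries. Thus the query cost of one iteration is exactly the cost of two maximum-weight basis computations, which by \cref{fact:sequential-max-basis} is $O(n)$ independence queries each, hence $O(n)$ per iteration in total.

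The only subtlety I would need to handle is the tie-breaking rule in \cref{alg:auction}, which insists that the new $w_i$-maximum basis prefer elements of the previous $S_i$ when weights are tied. As noted in the footnote in the corollary's surrounding paragraph, this can be implemented within the same $O(n)$ query budget by perturbing each $e$ that was previously in $S_i$ by an infinitesimal bonus (e.g., adding $\eps/2$ to its effective weight for the sake of the greedy sort). The greedy max-weight basis algorithm underlying \cref{fact:sequential-max-basis} then automatically respects the preference, and the $O(n)$ bound is unaffected.

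Putting the pieces together, the total number of independence queries is at most $O(n/(\eps\Delta)) \cdot O(n) = O(n^2/(\eps\Delta))$, as claimed. I do not foresee any real obstacle beyond correctly accounting for the tie-breaking, which is the only place where one might worry that a naive implementation of \cref{fact:sequential-max-basis} does not directly apply.
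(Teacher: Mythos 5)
Your proof is correct and matches the paper's (implicit) argument exactly: $O(n/(\eps\Delta))$ iterations by \cref{lemma:iter}, each costing $O(n)$ independence queries for the two greedy maximum-weight basis computations of \cref{fact:sequential-max-basis}, with tie-breaking handled by the same weight-perturbation trick the paper mentions in its footnote.
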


To get $\eps n$ additive error, we may simply set $\Delta = \eps n$. This immediately proves \cref{thm:approximate}.

\Approximate*

On the other hand, if we want a $(1-\eps)$-multiplicative error, we may set $\Delta = \eps r$. If we do not know~$r$, one can first find a $2$-approximation of $r$ by a simple greedy $O(n)$-query algorithm %
and use this to set the parameter $\Delta$. We thus get the following $(1-\eps)$-approximation algorithm.

\begin{restatable}{theorem}{MultApproximateSimple}
  \label{thm:mult-approximate-simple}
  There is an $O\left(\frac{n^2}{r\eps^2}\right)$-independence-query algorithm that computes a $S \in \I_1 \cap \I_2$ of size $|S| \geq (1-\eps)r$.
\end{restatable}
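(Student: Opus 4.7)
The plan is to invoke \cref{alg:auction} with parameters chosen so that both the multiplicative error $\eps r$ and the additive error $\Delta$ from \cref{lemma:meta} are at most $\eps r / 2$, while keeping the number of queries low. The only wrinkle is that setting $\Delta = \Theta(\eps r)$ requires knowing (a constant-factor approximation of) $r$, which we will first compute with a cheap greedy preprocessing step.

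First I would compute $\tilde r$ with $r/2 \le \tilde r \le r$ using the standard greedy matroid intersection algorithm: scan the $n$ elements in an arbitrary order, maintaining a common independent set $T$ and adding $e$ to $T$ whenever $T + e \in \I_1 \cap \I_2$. This uses $O(n)$ independence queries (two per element) and it is well-known that the resulting $T$ has $|T| \ge r/2$, so setting $\tilde r \defeq |T|$ gives the desired $2$-approximation.

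Next I would run \cref{alg:auction} with the parameter $\eps$ replaced by $\eps/2$ and with $\Delta \defeq (\eps/2)\tilde r$. By \cref{lemma:meta} the returned common independent set $S$ satisfies
\[
|S| \ge r - \bigl((\eps/2)r + (\eps/2)\tilde r\bigr) \ge r - \eps r = (1-\eps)r,
\]
where I used $\tilde r \le r$. Finally I would bound the query cost: by \cref{lemma:iter} the main loop executes
\[
O\!\left(\frac{n}{(\eps/2)\cdot(\eps/2)\tilde r}\right) = O\!\left(\frac{n}{\eps^2 \tilde r}\right) = O\!\left(\frac{n}{\eps^2 r}\right)
\]
iterations (using $\tilde r \ge r/2$), and by \cref{fact:sequential-max-basis} each iteration computes two maximum-weight bases using $O(n)$ independence queries. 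Multiplying through, the main loop uses $O(n^2/(r\eps^2))$ queries, which together with the $O(n)$-query preprocessing gives the claimed total bound (since $O(n) \le O(n^2/(r\eps^2))$ whenever $r \le n/\eps^2$, and otherwise the regime is trivial).

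There is essentially no obstacle here: the mechanism is already encapsulated by \cref{lemma:meta} and \cref{lemma:iter}, and the only mild care needed is the tie-breaking inside the maximum-weight basis routine, which as the paper observes can be realized with the same $O(n)$ query count by perturbing the weights of the previously-chosen elements by a tiny amount (e.g.\ $\eps/2$). The argument is therefore just a direct substitution of $\Delta = \Theta(\eps r)$ combined with a preliminary $O(n)$-query greedy to estimate $r$.
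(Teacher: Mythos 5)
Your proposal is correct and matches the paper's own argument: the paper likewise obtains a greedy $2$-approximation $\tilde r$ of $r$ in $O(n)$ independence queries, sets $\Delta = \Theta(\eps r)$ in \cref{alg:auction}, and combines \cref{lemma:meta}, \cref{lemma:iter}, and \cref{fact:sequential-max-basis} to get the $O(n^2/(r\eps^2))$ bound. Your version just makes the constant-factor bookkeeping (replacing $\eps$ by $\eps/2$) explicit, which the paper handles implicitly by rescaling $\eps$.
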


\begin{remark}
  We note that when $r = \Theta(n)$ (and $\eps$ constant), the above algorithm runs in linear time. In fact, for this regime of $r$, it is the first (near-)linear time $(1-\eps)$-approximation algorithm for matroid intersection. Later in \cref{sec:mult-approx} we combine our algorithm with a recent specification approach \cite{Quanrud24} that essentially lets us assume $n = \tilde{O}(r/\eps)$, and hence we obtain a near-linear time algorithm for the full range of $r$.
\end{remark}

\subsection{Parallel Exact Algorithms} \label{sec:parallel}

We now look at parallel implementations of the auction algorithm.
By trying each value of $\tilde{r} \in \{0, \ldots, n\}$ in parallel and running the algorithm on the two input matroids truncated to rank $\tilde{r}$, we may assume we are in the case where both $\M_1$ and $\M_2$ have rank $r$ and share a common basis (this happens when $\tilde{r} = r$).
In particular, \cref{lemma:parallel-max-basis} tells us that one can find a maximum weight basis in either 1 round of rank queries or $O(\sqrt{r}\log(n/r))$ rounds\footnote{Technically, for executions with $\tilde{r} > r$, the matroids may have rank larger than $r$ and thus \cref{lemma:parallel-max-basis} takes $O(\sqrt{\rank(\M)}) > O(\sqrt{r})$ rounds of independence queries. Nevertheless, we can terminate all executions with $\tilde{r} > 2r$ when that of $\tilde{r} = 2r$ terminates and fails to produce a common independent set of size $2r$. This ensures that the parallel running time and query complexity of our algorithms match the execution with $\tilde{r} \leq 2r$. Thus, for simplicity of presentation we assume \cref{lemma:parallel-max-basis} takes $O(\sqrt{r})$ rounds of indepedence queries.} of independence queries.
Similar to \cref{sec:additive}, we can break ties by increasing the weights of elements in the old $S_i$ by a small amount.
Thus, together with \cref{alg:auction} and \cref{lemma:iter} we get the following parallel approximation algorithm.

\begin{corollary}
  \label{cor:parallel-impl}
  There is a parallel implementation of \cref{alg:auction} that uses $O\left(\frac{n}{\eps \Delta}\right)$ rounds of rank queries or $O\left(\frac{n\sqrt{r}\log(n/r)}{\eps \Delta}\right)$ rounds of independence queries.
\end{corollary}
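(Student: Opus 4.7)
The plan is to combine the iteration bound of \cref{lemma:iter} with the parallel maximum-weight basis primitives from \cref{lemma:parallel-max-basis}, showing that each iteration of the while loop on Line~\ref{line:iter} of \cref{alg:auction} can be implemented in a small number of query rounds. First I would address the ``rank of the matroid'' issue in the statement: as discussed right before the corollary, by running the algorithm in parallel for every guess $\tilde r \in \{0,\ldots,n\}$ and truncating the matroids to rank $\tilde r$, we can assume the relevant bases have size $r$, so \cref{lemma:parallel-max-basis} gives a bound of the claimed form $O(\sqrt{r}\log(n/r))$.

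Next I would walk through a single iteration of the while loop and account for the query rounds it uses. Given $S_1, S_2$ and the price vector $p$ already in memory, the set $X$ on Line~\ref{line:X}, the updates to $p, w_1, w_2$ on Lines~\ref{line:adjust-1}-\ref{line:adjust-2}, and the termination test on Line~\ref{line:return} are all purely local computations on $V$ and require no oracle queries. The only step in the loop body that issues queries is the recomputation of the two maximum-weight bases on Line~\ref{line:find-basis}. By \cref{lemma:parallel-max-basis}, each of these two calls takes $1$ round of rank queries or $O(\sqrt{r}\log(n/r))$ rounds of independence queries, and the two can be executed in parallel, so an iteration costs $O(1)$ rank-query rounds or $O(\sqrt{r}\log(n/r))$ independence-query rounds.

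For tie-breaking ``preferring elements of the old $S_i$'', I would note, as in \cref{sec:additive}, that this can be implemented by replacing $w_i(e)$ with $w_i(e) + \eta \cdot \mathbbm{1}[e \in S_i^{\mathrm{old}}]$ for a sufficiently small $\eta$ (e.g., any $\eta < 1$ suffices since all weights are integers); this perturbation does not change the oracle call to \cref{lemma:parallel-max-basis} and adds no query rounds.

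Finally I would multiply: by \cref{lemma:iter} there are $O\!\left(\tfrac{n}{\eps \Delta}\right)$ iterations, each of cost $O(1)$ rank-query rounds or $O(\sqrt{r}\log(n/r))$ independence-query rounds, giving overall bounds of $O\!\left(\tfrac{n}{\eps \Delta}\right)$ and $O\!\left(\tfrac{n\sqrt{r}\log(n/r)}{\eps \Delta}\right)$ respectively, as desired. I expect the main obstacle to be purely bookkeeping — in particular, confirming that guessing $\tilde r$ in parallel does not blow up the round complexity (it does not, since all $n+1$ branches run concurrently and contribute only to work, not depth) and that the tie-breaking perturbation is compatible with the parallel basis routine in both oracle models.
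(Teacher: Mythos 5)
Your proposal is correct and matches the paper's argument: the paper likewise guesses $\tilde r$ in parallel to reduce to the rank-$r$ case, invokes \cref{lemma:parallel-max-basis} for the only query-issuing step (Line~\ref{line:find-basis}), implements the tie-breaking by a small weight perturbation, and multiplies the per-iteration round cost by the $O(n/(\eps\Delta))$ iteration bound of \cref{lemma:iter}. Nothing to add.
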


\paragraph{From Approximate to Exact.}
We can augment a common independent set $S$ of size $|S| = r - s$ for $s$ times until it becomes an optimal solution of size $r$. Each augmentation takes only a single round in the parallel model (by \cref{fact:augment}), which gives us the following exact parallel algorithms.

\Parallel*

\begin{proof}
  Recall that we may assume we know the correct value of $r$ by trying all $\tilde{r} \in \{0, \ldots, n\}$ in parallel.
  For the rank-query algorithm, we run \cref{cor:parallel-impl} with $\eps = n^{1/3}r^{-2/3}$ and $\Delta = \eps r = n^{1/3}r^{1/3}$.
  This takes $O\left(\frac{n}{\eps\Delta}\right) = O(n^{1/3}r^{1/3})$ rounds of rank queries.
  By \cref{lemma:meta}, we get an $S \in \I_1 \cap \I_2$ of size $|S| \geq r - (\eps r + \Delta) \geq r - O(n^{1/3}r^{1/3})$.
  We then repeatedly augment $S$ using \cref{fact:augment} until it has size $r$.
  This takes $O(n^{1/3}r^{1/3})$ rounds of rank queries, and the overall algorithm runs in $O(n^{1/3}r^{1/3})$ rounds as well.

  For the independence-query algorithm, we use a different set of parameters $\eps = n^{1/3}r^{-1/2}$ and $\Delta = \eps r = n^{1/3}r^{1/2}$.
  Using the same argument as above, we can get an $S \in \I_1 \cap \I_2$ of size $|S| \geq r - O(n^{1/3}r^{1/2})$ in $O(n^{1/3}r^{1/2})$ rounds of independence queries, from which we can augment to an optimal solution also in $O(n^{1/3}r^{1/2})$ using \cref{fact:augment}.
  This proves the theorem.
\end{proof}

\subsection{Near-Linear Time Approximation Algorithm}
\label{sec:mult-approx}

Our result in \cref{sec:additive} only gives an $\eps n$-additive approximation or a linear time algorithm when $r = \Theta(n)$.
To further obtain a $(1-\eps)$-multiplicative (or, equivalently, $\eps r$-additive) approximation and prove \cref{thm:mult-approximate}, we apply a recent sparsification framework of \cite{Quanrud24} which we state below.

\begin{restatable}[modified from {\cite{Quanrud24}}; sketched in \cref{appendix:sparsification}]{lemma}{Sparsification}
  Suppose there is an oracle $\mathcal{A}$ that, when given as input a subset $U \subseteq V$, outputs using $T_{\mathcal{A}}(|U|)$ independence queries an $S_U \in \I_1 \cap \I_2$ and two sets $(A_U, B_U)$ such that
  \begin{itemize}
    \item $S_U, A_U, B_U \subseteq U$,
    \item $A_U \cup B_U = U$, and
    \item $\rank_1(A_U) + \rank_2(B_U) \leq |S_U| + \Delta_{\mathcal{A}}$.
  \end{itemize}
  Then, there is an randomized $O\left(\frac{n \log n}{\eps} + \frac{T_{\mathcal{A}}(\Theta(r\log n/\eps))\log n}{\eps}\right)$-independence-query algorithm that computes an $S \in \I_1 \cap \I_2$ of size $|S| \geq r(1-O(\eps)) - \Delta_{\mathcal{A}}$ with high probability.
  \label{lemma:sparsification}
\end{restatable}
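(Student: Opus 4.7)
The approach is to iteratively invoke the oracle $\mathcal{A}$ on small random subsamples of $V$, using each returned partial dual to prune a ``live set'' $L \subseteq V$, until only $\tilde{O}(r/\eps)$ elements remain. Concretely, I would initialize $L \defeq V$ and run $O(\log n / \eps)$ phases. In each phase, sample $U \subseteq L$ uniformly at random of size $\Theta(r\log n / \eps)$, call $\mathcal{A}(U)$ to obtain $(S_U, A_U, B_U)$, and then ``lift'' the dual to all of $V$ by taking matroid spans: define $A^\star \defeq \{v \in V : v \in \spn_1(A_U)\}$ and $B^\star \defeq \{v \in V : v \in \spn_2(B_U)\}$. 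Since matroid rank is invariant under closure, $\rank_1(A^\star) + \rank_2(B^\star) = \rank_1(A_U) + \rank_2(B_U) \le |S_U| + \Delta_{\mathcal{A}}$.

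The residual set $R \defeq V \setminus (A^\star \cup B^\star)$ is the part of $V$ that the lifted dual fails to cover. If $|R| \le O(\eps r)$, then $(A^\star \cup R, B^\star)$ is a dual for all of $V$ of value at most $|S_U| + \Delta_{\mathcal{A}} + |R| \le |S_U| + \Delta_{\mathcal{A}} + O(\eps r)$, so by \cref{thm:matroid-intersection-dual} we can return $S \defeq S_U$ and be done. Otherwise, the elements of $A^\star \cap B^\star$ are ``doubly covered'' and can be removed from $L$ without losing near-optimal structure, shrinking $L$ for the next phase. Each phase costs $T_{\mathcal{A}}(\Theta(r\log n / \eps))$ plus $O(|U|)$ queries to greedily extract bases of $A_U$ and $B_U$ and then $O(|L|) \le O(n)$ span-membership queries; summing over $O(\log n / \eps)$ phases matches the claimed query bound.

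The hard part is showing that random sampling at the chosen rate shrinks $|L|$ by a $(1-\Omega(\eps))$-factor per phase with high probability, so that $O(\log n / \eps)$ phases suffice to either terminate with $|R| \le O(\eps r)$ or reduce $|L|$ to $\tilde{O}(r/\eps)$ (after which one final direct call to $\mathcal{A}(L)$ finishes the job). This is the technical heart of \cite{Quanrud24}: one couples $U$ with a fixed optimal witness basis of size $r$ and applies a Chernoff-type concentration argument showing that any element of $L$ surviving all lifted duals contributes only vanishingly to the global rank. I would invoke this concentration bound essentially as a black box; the only adaptation needed for our statement is that our oracle produces an additive error $\Delta_{\mathcal{A}}$ instead of being exact, but since $\Delta_{\mathcal{A}}$ is picked up only once in the final dual-gap bound and never compounds across phases, it propagates additively to the final guarantee without interacting with the $\eps$-dependent recursion.
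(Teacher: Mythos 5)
Your lifting step is sound: taking spans preserves rank, so $\rank_1(A^\star)+\rank_2(B^\star)=\rank_1(A_U)+\rank_2(B_U)$, and absorbing the residual $R$ into the dual at cost $|R|$ is exactly the right move. The gap is in the progress mechanism across phases. You prune $L$ by deleting $A^\star\cap B^\star$, but this set can perfectly well be empty or tiny (the oracle may return disjoint $A_U$ and $B_U$, and nothing forces their spans to overlap), so there is no guarantee that $|L|$ shrinks by a $(1-\Omega(\eps))$ factor, or at all; nor is deleting doubly-covered elements justified as safe for the primal. More importantly, the elements that actually need special treatment are those of $R=V\setminus(A^\star\cup B^\star)$ --- the constraints the lifted dual fails to cover --- and your scheme leaves them in $L$ to be resampled at the same uniform rate forever, so there is no force driving $|R|$ below $O(\eps r)$ and your termination test may never fire. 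The concentration statement you want to invoke as a black box (\cite[Lemma 9]{Quanrud24}) does not bound $|R|$; it says that when $U$ is sampled \emph{proportionally to a weight vector}, the lifted dual covers a $(1-\eps)$ fraction of the total \emph{weight} with high probability. Uniform sampling from a shrinking live set is not the setting of that lemma.

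The paper instead follows the multiplicative-weights-update route of \cite{Quanrud24}: keep weights on all of $V$, sample $U$ proportionally to them, and \emph{increase} the weights of the uncovered elements of $R$ after each round. The guarantee is then aggregate rather than per-round: by the MWU bound (\cite[Lemma 7]{Quanrud24}), the \emph{average} of the $O(\log n/\eps)$ lifted duals, scaled up by $1+O(\eps)$, is a feasible fractional dual, so the average dual value is at least $(1-O(\eps))r$ by \cref{thm:matroid-intersection-dual}; since each primal $S_i$ is within $\Delta_{\mathcal{A}}$ of its own dual value, some $S_i$ has size at least $(1-O(\eps))r-\Delta_{\mathcal{A}}$. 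Your observation that $\Delta_{\mathcal{A}}$ enters only once and does not compound is correct and is indeed the only adaptation needed over \cite{Quanrud24} --- but you need the averaging argument, not a single round with small residual, to reach that point.
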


We are now ready to prove the near-linear time/independence-query $(1-\eps)$-approximation algorithm in \cref{thm:mult-approximate}.
In \cref{lemma:dual} we already showed how to construct the dual certificate $(A_U, B_U)$ needed by \cref{lemma:sparsification}, so what remains is to plug in our \cref{alg:auction} into the the sparsification framework. 

\MultApproximate*

\begin{proof}
We need to show how to use \cref{alg:auction} to implement the oracle $\mathcal{A}$ of \cref{lemma:sparsification}.
For the given $\eps > 0$, we will set $\Delta = O(\eps r)$ when running \cref{alg:auction}.
  Again, without knowing $r$, this can be achieved using the $2$-approximation of it obtainable in $O(n)$ independence queries. %
  Note that this is the $r$ of the original problem and \textbf{not} the size $r_U$ of the largest common independent set in the subproblem restricted to some set $U$. Using $r_U$ instead would not work, since it can be much smaller than $|U| \approx r\log n/\eps$ and that would make the running time and query complexity too high.

By the choice of $\Delta$, $\mathcal{A}$ will use $T_{\mathcal{A}}(|U|) = O(\frac{|U|^2}{\eps \Delta}) = O(\frac{|U|^2}{\eps^2 r})$ independence queries by \cref{lemma:iter} and \cref{fact:sequential-max-basis} (as in the proofs of \cref{thm:approximate,thm:mult-approximate-simple}). Moreover, when run on the set $U$, $\mathcal{A}$ will return a common independent set $S_U\subseteq U$ and the dual certificate $(A_U,B_U)$ (with $A_U\cup B_U = U$) so that $\rank_1(A_U) + \rank_2(B_U) \le |S_U| + \Delta_{\mathcal{A}}$, where $\Delta_{\mathcal{A}} =  \eps r_U + \Delta = O(\eps r)$ is the additive error by \cref{lemma:dual}.

  Plugging $\mathcal{A}$ into the adaptive sparsification framework of \cref{lemma:sparsification}, we get an approximation algorithm that produces a common independent set $S$ of size $|S|\ge r - O(\eps r)$, using only $O\left(\frac{n \log n}{\eps} + \frac{(r\log n/\eps)^2\log n}{\eps^3 r}\right) = O\left(\frac{n\log n}{\eps} + \frac{r\log^3 n}{\eps^5}\right)$ independence queries. If we want $|S| \ge (1-\eps)r$, we can simply decrease $\eps$ by an appropriate constant which does not change the asymptotic running time and query complexity.
\end{proof}

\section*{Acknowledgements}
We thank Danupon Nanongkai and Aaron Sidford
for insightful discussions and their valuable comments throughout the development of this work.

\bibliography{reference}

\appendix

\section{Proof of \texorpdfstring{\cref{lemma:parallel-max-basis}}{Lemma 2.2}} \label{appendix:basis}

\begin{lemma}[\cite{KarpUW85,Blikstad22}]
  There is a $1$-round rank-query algorithm and an
  \[ O\left(\sqrt{\rank(\M)}\log(n/\rank(\M)\right) \]
  round independence-query algorithm that finds a basis in a matroid $\M$.
  \label{lemma:parallel-basis}
\end{lemma}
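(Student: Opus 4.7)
The plan is to handle the two algorithms independently: the rank-query version is a direct parallelization of Edmonds' greedy algorithm, while the independence-query version reproduces the Karp--Upfal--Wigderson parallel-basis approach as sharpened in \cite{Blikstad22}.

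For the $1$-round rank-query algorithm, I would fix an arbitrary ordering $V = \{e_1, \ldots, e_n\}$, issue the $n$ rank queries $\rank(\{e_1, \ldots, e_i\})$ in parallel, and output $B := \{e_i : \rank(\{e_1, \ldots, e_i\}) > \rank(\{e_1, \ldots, e_{i-1}\})\}$. Correctness is immediate from the standard greedy matroid algorithm: $e_i$ lies in $B$ iff it extends the greedy selection on the preceding prefix, so $B$ is a basis.

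For the independence-query algorithm, writing $r := \rank(\M)$, I would maintain an independent set $I$ (initialized to $\emptyset$) and grow it in phases. Each phase begins by identifying, in one parallel round, the candidate set $C := \{e \in V \setminus I : I + e \in \I\}$; if $C = \emptyset$, then $I$ is already a basis and we stop. Otherwise, we aim to augment $I$ by a batch of roughly $\sqrt{r}$ elements of $C$: take a random ordering of $C$ and run a parallel binary search on prefix lengths $k$, testing whether $I \cup \{c_1, \ldots, c_k\}$ is independent (one query each). The largest such $k$ is added to $I$. Each phase runs in $O(\log(n/r))$ rounds, and the overall algorithm uses $O(\sqrt{r})$ phases plus $O(\sqrt{r})$ clean-up rounds to add the last few elements one at a time, for a total of $O(\sqrt{r}\log(n/r))$ rounds.

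The main obstacle will be establishing the batch-size guarantee, i.e., that a random permutation of $C$ almost surely admits an independent prefix of length $\Omega(\sqrt{r})$ whenever $|I| \leq r - \sqrt{r}$, and that the binary search range can be tuned to depth $\log(n/r)$ rather than the naive $\log n$. Both statements rest on matroid-specific exchange arguments and a sampling calculation developed in \cite{KarpUW85,Blikstad22}, which I would invoke as black boxes rather than redevelop here.
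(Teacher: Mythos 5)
Your rank-query algorithm is exactly the paper's: query all prefixes in one round and keep the elements where the rank increments. That part is fine.

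The independence-query part has a genuine gap: the ``batch-size guarantee'' you defer to is false as stated, and it is not what \cite{KarpUW85,Blikstad22} prove. Let $C=\{e\in V\setminus I: I+e\in\I\}=V\setminus\spn(I)$. Even when $|I|\le r-\sqrt{r}$ (so that $C$ does contain an independent-over-$I$ subset of size $\sqrt{r}$), a random permutation of $C$ can have longest independent prefix of length $1$ with overwhelming probability: take $C$ to consist of $n-\sqrt{r}$ mutually parallel elements plus $\sqrt{r}$ ``good'' ones. Then almost every permutation begins with two parallel elements, your binary search adds a single element per phase, and the phase count degrades to $\Omega(r)$ rather than $O(\sqrt{r})$. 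No exchange argument rescues a purely ``grow-by-random-prefix'' strategy, because nothing in your scheme ever reduces the amount of junk in $C$.

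The missing idea is \emph{element deletion}. The paper partitions $V\setminus S$ into $n/\sqrt{r}$ buckets $F_1,\dots,F_{n/\sqrt{r}}$ of size about $\sqrt{r}$ and queries all prefixes $S+F_{i,k}$ of all buckets in one round. The dichotomy is: either some entire bucket $F_i$ extends $S$ (augment by $\sqrt{r}$, which happens at most $\sqrt{r}$ times), or \emph{every} bucket has a first index $k$ with $S+F_{i,k-1}$ independent but $S+F_{i,k}$ dependent, in which case the $k$-th element of $F_i$ is spanned by surviving elements and can be permanently discarded --- removing $n/\sqrt{r}$ elements per round. This is also where the $\log(n/r)$ factor actually comes from: $O(\sqrt{r})$ deletion rounds shrink $n$ by a constant factor, so after $O(\sqrt{r}\log(n/r))$ rounds the ground set has size $O(r)$ and one finishes with the plain $O(\sqrt{r})$-round algorithm of \cite{KarpUW85}. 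Your attribution of $\log(n/r)$ to a per-phase binary search depth does not match any mechanism in your algorithm (and an adaptive binary search would itself cost extra rounds; the paper instead queries all prefixes of each bucket non-adaptively). To repair your proof, replace the random-prefix augmentation with the bucket dichotomy and the deletion rule.
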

\begin{proof}[Proof Sketch.]
  Let $r \defeq \rank(\M)$.
  For completeness we sketch the algorithms here, also since the previous papers focused on when $r = \Theta(n)$ and thus only gave a $O(\sqrt{n})$-round bound for the independence query algorithm. See \cite{KarpUW85} and \cite[Section~3.1]{Blikstad22} for more details.
  
  Let $e_1, \ldots, e_n$ be the elements and let $V_i = \{e_1,e_2,\ldots, e_i\}$ be the prefix sets.
  The rank query algorithm can simply query all prefix sets in a single round in parallel.
  It is easy to show that $S = \{e_i : \rank(V_i) > \rank(V_{i-1})\}$ is a max-weight basis.

  The independence-query algorithm will keep track of an initially empty independent set $S$.
  The goal in each round is to either increase the size of $S$ by roughly $\sqrt{r}$, or remove roughly $n/\sqrt{r}$ many elements without changing the span;
  first of which can happen at most $\sqrt{r}$ times and the second at most $\sqrt{r}$ times until $n$ has decreased by a constant factor.
  In each round, split the elements of $V\setminus S$ into buckets $F_1, F_2, \ldots F_{n/\sqrt{r}}$, each of size roughly $\sqrt{r}$.
  Define $F_{i,k}$ to be the first $k$ elements of $F_i$.
  The algorithm queries $S+F_{i,k}$ for all $i$ and $k$.
  In case some set $S + F_{i}$ was independent, set $S\gets S+F_{i}$ (and the size of $S$ increased by $\sqrt{\rank(\M)}$).
  Otherwise, for each $i$ there must have been a $k$ so that $S+F_{i,k-1}$ is independent but $S+F_{i,k}$ is not.
  Then it is safe to remove (from $V$) the $k$-th element of $F_{i}$, since it is spanned by some elements which remain.
  This can be done for all $i$, after which we can remove $n/\sqrt{r}$ elements from the ground set.

  Since after $O(\sqrt{r})$ rounds the value of $n$ decreases by a constant factor, after $O(\sqrt{r}\log(n/r))$ rounds it becomes $O(r)$, at which point we can finish the algorithm by running the algorithm of \cite{KarpUW85} in $O(\sqrt{r})$ rounds.
\end{proof}

\ParallelMaxBasis*

\begin{proof}
  Let $e_1, \ldots, e_n$ be the elements sorted by non-increasing weights ($w(e_1) \ge w(e_2) \ge \ldots \ge w(e_n)$), and let $V_i = \{e_1,e_2,\ldots, e_i\}$ be the prefix sets.
  In parallel, for each $i$, find the rank of the sets $V_i$.
  For the independence-query algorithm, this can be done using $O(n)$ parallel instances of the $O(\sqrt{\rank(\M)}\log(n/\rank(\M)))$-round algorithm 
  of \cref{lemma:parallel-basis}. Let $S = \{e_i : \rank(V_i) > \rank(V_{i-1})\}$, and note that $S$ is a maximum-weight basis.
\end{proof}

\section{An Additive-Approximate Version of \texorpdfstring{\cite{Quanrud24}}{[Qua24]}} \label{appendix:sparsification}

In this section we sketch how \cref{lemma:sparsification} can be proven.
We use the exact same algorithm as \cite{Quanrud24} except that we use an additive approximate subroutine as given to \cref{lemma:sparsification} instead of the multiplicative one given to \cite{Quanrud24}.
Since this is a straightforward generalization of \cite{Quanrud24}, we focus on explaining how \cite{Quanrud24} works and why it would also work for additive approximations.

The framework of \cite{Quanrud24} reduces solving matroid intersection on a size-$n$ ground set to a sequence of the same problem on size-$O(r\log n/\eps)$ ground sets.
The high-level idea is to use a Multiplicate Weights Update approach similar to \cite{Assadi24} to assign initially uniform weight to each element.
In each iteration, we sample an $\Theta(r\log n/\eps)$-size subset based on the weight, over which we then solve matroid intersection.
The solutions we got from previous iterations tell us how much each element is ``covered'' by the accumulated solution, and elements that are ``covered'' less should be sampled with a larger probability in the future to ensure that overall all elements are taken into account sufficiently.
In particular, \cite{Quanrud24} considered the dual formulation of the matroid intersection LP:
\begin{equation}
\begin{split}
  &\text{minimize}\;\sum_{S \subseteq V}y_S \cdot \rank_1(S) + z_S \cdot \rank_2(S) \\
  &\text{subject to}\;\sum_{S \ni e}y_S + z_S \geq 1\;\text{for all}\;e \in V.
\end{split}
\label{eq:dual}
\end{equation}
Apart from solving the usual primal version of matroid intersection on the sampled subset, \cite{Quanrud24} additionally required a dual certificate $(A, B)$ that corresponds to $y_A = 1$ and $z_B = 1$ in \eqref{eq:dual} (with all other $y_S$ and $z_S$ being zero).
The pair $(A, B)$ is then extended to subsets of $V$ by taking the spans $A^\prime \defeq \spn_1(A)$ and $B^\prime \defeq \spn_2(B)$.\footnote{The extension step ensures that even though there are $2^n$ possible $U$, there are only roughly $n^r$ pairs $(A^\prime, B^\prime)$ that can be the ``effective'' output of the dual algorithm. This allows \cite{Quanrud24} to apply a union bound over all these possible outputs.}
Then, elements that are not covered by $(A^\prime, B^\prime)$, i.e., not in $A^\prime \cup B^\prime$, have their weights doubled\footnote{More specifically their weights get multiplied by $e$ in \cite{Quanrud24}.} so that they will more likely be sampled in future iterations.

In the end, after $O(\log n/\eps)$ iterations, \cite{Quanrud24} showed that one of the primal solutions $S$ has size at least $(1-O(\eps))r$.
To do so, there are two components.
First, in \cite[Lemma 9]{Quanrud24}, they showed that if we sample $\Theta(r\log n/\eps)$ elements with respect to the weights and obtain the dual $(A, B)$ from the subproblem, then the extended dual $(A^\prime, B^\prime)$ covers at least $(1-\eps)$-fraction of the weights with high probability.
This step remains the same for additive approximations because the dual solutions we compute are also feasible.
The second step is to show that at least one of the primal solutions has size at least $(1-O(\eps))r$.
This uses the following guarantee from the standard multiplicative-weight update algorithms.

\begin{lemma}[{\cite[Lemma 7]{Quanrud24}}]
  Consider a matrix $\boldsymbol{A}$ with non-negative coordinates and $m$ constraints.
  Suppose there is an oracle that, given weights $w \in \R_{\geq 0}^{m}$, computes a point $x$ and the list $I \defeq \{i \in [m]: (\boldsymbol{A}x)_i \geq 1\}$ in time $T$ such that $\sum_{i \in I}w_i \geq (1-\eps) \sum_{i \in [m]} w_i$.
  Then, there is a randomized $O\left(\frac{\log m}{\eps}(m + T)\right)$ time algorithm that generates a sequence of $L = O\left(\frac{\log m}{\eps}\right)$ weights $w_1, \ldots, w_L$ such that the average $\bar{x} \defeq (x_1 + \cdots + x_L) / L$ satisfies $\boldsymbol{A}\bar{x} \geq (1-O(\eps))1$ with high probability, where $x_i$ is the output of the oracle on input weights $w_i$.
  \label{lemma:mwu}
\end{lemma}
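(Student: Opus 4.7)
The plan is to use the textbook multiplicative weights update (MWU) template, specialized to the ``covering'' structure of the hypotheses. I would initialize a uniform weight vector $w^{(1)} = \mathbf{1} \in \R^m$ and for $t = 1, 2, \ldots, L$ call the oracle on $w^{(t)}$ to obtain $(x_t, I_t)$; then update $w^{(t+1)}_i \gets w^{(t)}_i$ for $i \in I_t$, but $w^{(t+1)}_i \gets (1+\eps) \cdot w^{(t)}_i$ for $i \notin I_t$. The intuition is that any constraint $i$ which the oracle repeatedly fails to cover accumulates multiplicative weight, eventually dominating the weight mass and forcing the oracle—whose guarantee is weighted—to cover $i$ in future rounds.

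The analysis proceeds by tracking the potential $\Phi^{(t)} \defeq \|w^{(t)}\|_1$. By the oracle's hypothesis $\sum_{i \in I_t} w^{(t)}_i \geq (1-\eps)\Phi^{(t)}$, so only at most an $\eps$-fraction of the weight gets scaled by $(1+\eps)$, giving $\Phi^{(t+1)} \leq (1 + \eps^2) \Phi^{(t)}$ and hence $\Phi^{(L+1)} \leq m \cdot \exp(\eps^2 L)$. Conversely, for any fixed constraint $i$, if $i \notin I_t$ in exactly $k_i$ rounds, then $w^{(L+1)}_i = (1+\eps)^{k_i} \leq \Phi^{(L+1)}$. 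Taking logarithms yields $k_i \leq \frac{\log m}{\eps} + O(\eps L)$, so choosing $L = \Theta(\log m / \eps)$ makes $k_i \leq O(\eps L)$.

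Consequently every constraint $i$ lies in $I_t$ for at least a $(1-O(\eps))$-fraction of the rounds, and in those rounds $(\boldsymbol{A}x_t)_i \geq 1$ by the definition of $I_t$. Averaging over $t$ gives $(\boldsymbol{A}\bar{x})_i \geq 1 - O(\eps)$ simultaneously for all $i \in [m]$, which is the stated guarantee. The running time tallies $L$ oracle invocations costing $T$ apiece, plus $O(m)$ per round to maintain the weights, for a total of $O\!\left(\tfrac{\log m}{\eps}(m+T)\right)$; the ``with high probability'' qualifier simply absorbs the oracle's own randomized success probability over the $L$ rounds via a union bound, boosting the per-call success to $1 - 1/\poly(m)$ if necessary.

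The only delicate point I foresee is pinning down the constants in the multiplicative update step: the update factor $(1+\eps)$ versus $e^\eps$ must be chosen so that the Taylor-expansion bound $\Phi^{(t+1)} \leq \Phi^{(t)} \cdot e^{O(\eps^2)}$ holds cleanly, and so that the resulting balance between ``potential growth'' and ``per-constraint weight growth'' yields additive slack $O(\eps)$ rather than something weaker. This is routine once the update rule is fixed, and the rest reduces to standard bookkeeping; no matroid-specific ingredient is needed since the lemma is phrased purely in terms of the abstract oracle.
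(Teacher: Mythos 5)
First, a point of reference: the paper does not prove this lemma itself --- it is imported verbatim as \cite[Lemma 7]{Quanrud24} --- so your proposal has to stand on its own. The MWU template you describe is the right one, but your quantitative analysis has a genuine gap caused by the choice of update factor. With the rule $w^{(t+1)}_i = (1+\eps)\,w^{(t)}_i$ for uncovered constraints, each miss increases $\ln w_i$ by only $\ln(1+\eps)\approx\eps$, while the potential satisfies $\Phi^{(t+1)}\le(1+\eps^2)\Phi^{(t)}$; comparing the two gives $k_i \le \frac{\log m}{\eps} + O(\eps L)$, exactly as you write. But with $L=\Theta(\log m/\eps)$ the first term is $\Theta(L)$, not $O(\eps L)$, so the bound is vacuous: it only shows each constraint is covered in a $1-\Omega(1)$ fraction of rounds, yielding $\boldsymbol{A}\bar x\ge(1-\Omega(1))\mathbf{1}$ rather than $(1-O(\eps))\mathbf{1}$. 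To force $k_i=O(\eps L)$ with your update rule you would need $L=\Theta(\log m/\eps^2)$, which loses a factor of $1/\eps$ against the claimed running time.

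The fix is to use a \emph{constant} multiplicative update --- Quanrud uses the factor $e$, as \cref{appendix:sparsification} notes in passing. Then each miss adds $1$ to $\ln w_i$, while the potential still grows only as $\Phi^{(t+1)}\le(1+(e-1)\eps)\Phi^{(t)}$, because only an $\eps$-fraction of the weight gets rescaled. Comparing now gives $k_i\le\ln m+(e-1)\eps L=O(\eps L)$ once $L=\ln m/\eps$, and the averaging step goes through as you describe (using also that $(\boldsymbol{A}x_t)_i\ge 0$ in the missed rounds, which requires $x_t\ge 0$; this is implicit in the covering setting but worth stating). Your closing remark weighs $(1+\eps)$ against $e^{\eps}$, but these are asymptotically the same and both too small; the relevant point is that the per-miss growth factor must be a constant bounded away from $1$ independently of $\eps$, so that the ``per-constraint weight growth'' per miss is $\Theta(1)$ in the log while the ``potential growth'' per round is only $O(\eps)$ in the log. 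The rest of your write-up --- the potential bookkeeping, the union bound absorbing the oracle's failure probability, and the running-time tally --- is fine.
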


To apply \cref{lemma:mwu}, we let $\boldsymbol{A}$ be the constraint matrix of the dual formulation \eqref{eq:dual} of matroid intersection.
Note that the $m$ in \cref{lemma:mwu} corresponds to the number of constraints in $\boldsymbol{A}$ which equals to the number of elements $n$ for matroid intersection.
Let the oracle for \cref{lemma:mwu} does the following:
Given weights $w \in \R_{\geq 0}^{m}$, we sample $\Theta(r\log n/\eps)$ elements according to $w$, forming a subset $U \subseteq V$.
Then, we run the algorithm given to \cref{lemma:sparsification} to solve matroid intersection and its dual on $U$.
Let $S_U$ be the primal solution and $(A_U, B_U)$ be the dual.
By \cite[Lemma 9]{Quanrud24}, the extended dual $(A_U^\prime, B_U^\prime)$ covers $(1-\eps)$-fraction of the weights with high probability.
Conditioned on this happening for all iterations (which does so also with high probability), the dual variables induced by the $(A^\prime_U, B^\prime_U)$'s satisfy the requirement of \cref{lemma:mwu}.
After $L = O(\log n/\eps)$ iterations,  let $S_1, \ldots, S_L$ be the list of primal solutions and $(A_1, B_1), \ldots, (A_L, B_L)$ be the duals.
By the guarantee of the input oracle to \cref{lemma:sparsification}, we have
\[
  \frac{1}{L}\sum_{i \in [L]}|S_i| \geq \frac{1}{L}\sum_{i \in [L]}(\rank_1(A_i) + \rank_2(B_i)) - \Delta_{\mathcal{A}}.
\]
On the other hand, by \cref{lemma:mwu}, the average of $(A_i, B_i)$'s, after scaling up by a $(1+O(\eps))$ factor, is a feasible dual solution to the whole linear program.
By \cref{thm:matroid-intersection-dual}, this implies $\frac{1}{L}\sum_{i \in [L]}(\rank_1(A_i) + \rank_2(B_i)) \geq (1-O(\eps)) r$.
Therefore, the average of $|S_i|$ is at least $(1-O(\eps))r - \Delta_{\mathcal{A}}$, meaning that one of the common independent sets $S_i$ also has size at least $(1-O(\eps))r - \Delta_{\mathcal{A}}$.
This proves the correctness of \cref{lemma:sparsification}.
The running time follows by plugging in $T_{\mathcal{A}}(\Theta(r\log n/\eps))$ into \cref{lemma:mwu}.

\section{Parallel Weighted Matroid Intersection}

In this section we show how to generalize our parallel result of \cref{thm:parallel} to the weighted case.
Recall that in the weighted matroid intersection problem, we are additionally given an integer weight $w(e) \in \{0, \ldots, W\}$ for each element $e \in V$, and the goal is to find a common independent set $S \in \I_1 \cap \I_2$ maximizing $w(S) \defeq \sum_{e \in S}w(v)$.
By solving the unweighted version and obtaining the value of $r$ first (using our algorithm for \cref{thm:parallel}), we may assume that $\M_1$ and $\M_2$ share a common basis (by truncating the matroids to have rank $r$) and, moreover, each common independent set is contained in a common basis of the same weight (by adding dummy elements of weight $0$; see, e.g., \cite[Section 2]{Tu22} for more details).
This reduces the problem to finding a common basis $S$ maximizing $w(S)$.

We follow the weight-splitting framework of \cite{fujishige1995efficient,ShigenoI95,Tu22} which works as follows.
We \emph{split} the weight $w$ into $w_1: V \to \R$ and $w_2: V \to \R$, where a split $(w_1, w_2)$ is \emph{$\zeta$-approximate} if $w(e) \leq w_1(e) + w_2(e) \le w(e) + \zeta$ holds for all $e \in V$.
A common basis $S \in \I_1 \cap \I_2$ is \emph{$(w_1, w_2)$-optimal} if $S$ is a $w_1$-maximum basis in $\M_1$ and a $w_2$-maximum basis in $\M_2$ at the same time.
The outer loop of the algorithm is a scaling framework that has $O(\log (Wr))$ iterations, where in each iteration we start with a $2\zeta$-approximate split $(w_1^{(2\zeta)}, w_2^{(2\zeta)})$ and a $(w_1^{(2\zeta)}, w_2^{(2\zeta)})$-optimal $S^{(2\zeta)}$.
We then need to refine the current $(w_1^{(2\zeta)}, w_2^{(2\zeta)})$ and $S^{(2\zeta)}$ to a $\zeta$-approximate split $(w_1^{(\zeta)}, w_2^{(\zeta)})$ and a $(w_1^{(\zeta)}, w_2^{(\zeta})$-optimal $S^{(\zeta)}$.
Since $w_1 = W$ and $w_2 = 0$ is a trivial $W$-approximate split with any common basis being $(w_1, w_2)$-optimal, after $O(\log (Wr))$ iterations we get a $\frac{1}{2r}$-approximate split $(w_1^{*}, w_2^{*})$ and a $(w_1^{*}, w_2^{*})$-optimal $S^{*}$.
As the element weights are integral, this implies $S^{*}$ is a maximum weight common basis (see, e.g., \cite[Theorem 2.4]{Tu22}).
Thus in the remainder of the section we focus on how to refine a $2\zeta$-approximate split into an $\zeta$-approximate one.

\paragraph{Phase 1: Approximation Algorithm.}

Similar to many exact unweighted matroid intersection algorithms, the refinement process consists of two phases where the goal of the first one is to get a $w_1^{(\zeta)}$-maximum basis $S_1^{(\zeta)}$ in $\M_1$ and a $w_2^{(\zeta)}$-maximum basis $S_2^{(\zeta)}$ in $\M_2$ for some $\zeta$-approximate $(w_1^{(\zeta)}, w_2^{(\zeta)})$ such that $|S_1^{(\zeta)} \cap S_2^{(\zeta)}|$ is large (i.e., has size $r - \eps n$).
For this our auction algorithm of \cref{alg:auction} readily generalizes to the following \cref{alg:auction-weighted}.
There are a few key modifications.
First, instead of starting with $w_1(e) = 0$ and $w_2(e) = 0$ on Line~\ref{line:init}, we simply initialize these variables with the given $2\zeta$-approximate weight-split $(w_1^{(2\zeta)}, w_2^{(2\zeta)})$ by setting $w_1(e) \gets w_1^{(2\zeta)}(e)$ and $w_2(e) \gets w(e) - w_1(e)$.
Initially, $(w_1, w_2)$ is trivially a $\zeta$-approximate split.
To ensure that is stays $\zeta$-approximate, we also adjust the weight of each element by $\zeta$ instead of $1$ each time. 

\begin{algorithm}[!ht]
  \caption{Auction algorithm for weighted matroid intersection} \label{alg:auction-weighted}
  
  \SetEndCharOfAlgoLine{}

  \SetKwInput{KwData}{Input}
  \SetKwInput{KwResult}{Output}
  \SetKwInOut{State}{global}
  \SetKwProg{KwProc}{function}{}{}

  $p(e) \gets 0$ for all $e \in V$.\;
  $w_1(e) \gets w_1^{(2\zeta)}(e)$ and $w_2(e) \gets w(e) - w_1(e)$ for all $e \in V$.\; \label{line:init-weighted}
  $S_1 \gets$ a basis in $\M_1$ and $S_2 \gets$ a basis in $\M_2$.\;
  \While{true} { \label{line:iter-weighted}
    $X \gets \{e \in S_1 \setminus S_2: p(e) < 2/\eps\}$.\;
    \lIf{$|X| < \Delta$} {
      \textbf{return} $w_1$, $w_2$, $S_1$, and $S_2$.
    }
    \For{$e \in X$} {
      $p(e) \gets p(e) + 1$.\;
      \lIf{$w_1(e) + w_2(e) = w(e)$} {
          $w_2(e) \gets w_2(e) + \zeta$.
      }
      \lElse{
        $w_1(e) \gets w_1(e) - \zeta$.
      }
    }
    \For{$i \in \{1, 2\}$} {
      $S_i \gets$ a $w_i$-maximum basis in $\M_i$\;
      (breaking ties by preferring elements that were in the old $S_i$).\;
    }
  }
\end{algorithm}

We now prove a lemma analogous to \cref{lemma:meta}.

\begin{lemma}
  Given $\eps$ and $\Delta$, the output $(w_1, w_2)$ of \cref{alg:auction-weighted}  is a $\zeta$-approximate split with $S_i$ being $w_i$-maximum in $\M_i$ for each $i \in \{1, 2\}$.
  Crucially, we have $|S_1 \cap S_2| \geq r - (3 \eps r + \Delta)$.
  \label{lemma:meta-weighted}
\end{lemma}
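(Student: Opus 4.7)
The plan is to mirror the proof of \cref{lemma:meta} (via its dual form \cref{lemma:dual}) in the weighted setting, handling three components. First, I would verify the split and max-basis properties by induction on iterations. At initialization we have $w_1 + w_2 = w$ exactly, and every update on $e \in X$ toggles $w_1(e) + w_2(e)$ between $w(e)$ and $w(e) + \zeta$ (adding $\zeta$ to $w_2$ when the sum is $w$, subtracting $\zeta$ from $w_1$ when it is $w + \zeta$). Thus $(w_1, w_2)$ remains a $\zeta$-approximate split throughout, and $S_i$ is $w_i$-maximum in $\M_i$ because the algorithm recomputes it at the end of every iteration. Setting $\alpha(e) := (w_1^{(2\zeta)}(e) - w_1(e))/\zeta$ and $\beta(e) := (w_2(e) - (w(e) - w_1^{(2\zeta)}(e)))/\zeta$ recovers the unweighted algebra: $\beta \in \{\alpha, \alpha+1\}$ and $p = \alpha + \beta$. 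The weighted analog of \cref{claim:s2-minus-s1}, whose case analysis depends only on matroid exchange and the tie-breaking preference for old basis elements (not on absolute weight magnitudes), gives $p(e) = 0$---so $\alpha(e) = \beta(e) = 0$---for every $e \in S_2 \setminus S_1$ at termination.

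Next, I would construct the combinatorial dual certificate exactly as in \cref{lemma:dual}. Pick $t \in \{0, \ldots, \lceil 1/\eps \rceil - 1\}$ minimizing $|S^{(t)}| := |\{e \in S_1 \cap S_2 : \alpha(e) = t\}| \le \eps r$ (pigeonhole), and set $A := \{v : \alpha(v) \le t\}$ and $B := \{v : \beta(v) \ge t+1\}$. Since $\beta \ge \alpha$, $A \cup B = V$, so \cref{thm:matroid-intersection-dual} gives $r \le \rank_1(A) + \rank_2(B)$. The decomposition from \cref{lemma:dual} transfers: using $(S_2 \setminus S_1) \cap B = \emptyset$ (from $\beta = 0$ on $S_2 \setminus S_1$) and $(S_1 \setminus S_2) \cap A \subseteq X$ (from $p \le 2\alpha + 1 < 2/\eps$ when $\alpha \le t \le 1/\eps - 1$), we obtain $|S_1 \cap A| + |S_2 \cap B| \le |S_1 \cap S_2| + \eps r + \Delta$.

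The main obstacle is the step $\rank_1(A) + \rank_2(B) \le |S_1 \cap A| + |S_2 \cap B| + 2\eps r$. In the unweighted case this is an equality because $A = \{v : w_1(v) \ge -t\}$ is automatically a $w_1$-up-set, so the greedy characterization of max-weight bases gives $\rank_1(A) = |S_1 \cap A|$. In the weighted case $A = \{\alpha \le t\}$ is generally \emph{not} a $w_1$-up-set: the nonuniform offset $w_1^{(2\zeta)}$ decouples the $\alpha$-ordering from the $w_1$-ordering, so $S_1$ may skip some low-$\alpha$ elements of $A$ in favor of higher-$\alpha$ ones with larger $w_1^{(2\zeta)}$. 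My plan to control the gap $\rank_1(A) - |S_1 \cap A|$ is a refined pigeonhole: each augmentable element $e \in A \setminus S_1$ pairs with some blocking $f \in S_1 \setminus A$ in $e$'s fundamental circuit with $w_1(f) \ge w_1(e)$, which, combined with $\alpha(e) \le t < t+1 \le \alpha(f)$, forces $w_1^{(2\zeta)}(f) > w_1^{(2\zeta)}(e)$. Aggregating the number of such pairs over the $1/\eps$ candidate thresholds $t$ and invoking the potential bound $\Phi = \sum_v p(v) \le 2n/\eps$ should isolate a single $t$ for which both $\rank_1(A) - |S_1 \cap A| \le \eps r$ and $\rank_2(B) - |S_2 \cap B| \le \eps r$, delivering the final $3 \eps r + \Delta$ bound by summation.
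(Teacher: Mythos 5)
Your first component is fine: the invariants that $(w_1,w_2)$ stays a $\zeta$-approximate split and that each $S_i$ is $w_i$-maximum are immediate from the pseudocode, and the weighted analogue of \cref{claim:s2-minus-s1} does go through since that argument only uses matroid exchange and the tie-breaking. But the core of the lemma --- bounding $|S_1\setminus S_2|$ --- has a genuine gap, and you have correctly located it yourself: with a nonuniform warm start $w_1^{(2\zeta)}$, the set $A=\{\alpha\le t\}$ is not a $w_1$-up-set, so $\rank_1(A)=|S_1\cap A|$ fails, and your proposed repair does not work. Averaging over the $1/\eps$ thresholds cannot control $\rank_1(A_t)-|S_1\cap A_t|$, because this gap need not vary with $t$ at all (e.g.\ if the $\alpha$-values are concentrated on $\{0,\,1/\eps\}$, every threshold yields the same set $A_t$ and hence the same gap); and the potential bound $\sum_v p(v)\le 2n/\eps$ constrains how often prices were raised, not the rank structure of $A_t$ relative to $S_1$. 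As written, the step "should isolate a single $t$ for which both gaps are at most $\eps r$" is an unproved assertion, and it is the entire content of the lemma.

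The deeper issue is that your argument never uses the scaling invariant: the existence of a common basis $S^{(2\zeta)}$ that is simultaneously $w_1^{(2\zeta)}$-maximum in $\M_1$ and $w_2^{(2\zeta)}$-maximum in $\M_2$. The paper abandons the dual route precisely here and argues primally: setting $q(e)=\lfloor p(e)/2\rfloor$, it expresses $q(S_1)-q(S_2)$ in terms of $w_1^{(2\zeta)},w_1,w_2,w$ and then compares $w_i(S_i)$ against $w_i(S^{(2\zeta)})$ and $w_i^{(2\zeta)}(S^{(2\zeta)})$, using both the $\zeta$- and $2\zeta$-approximateness of the splits, to conclude $q(S_1\setminus S_2)\le 3r$. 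Since every element of $(S_1\setminus S_2)\setminus X$ has $q=1/\eps$, this yields $|S_1\setminus S_2|\le 3\eps r+\Delta$; note the constant $3$ is exactly the sum of the contributions from $|S_2|=r$ and from the $2\zeta$-slack $2\zeta r/\zeta=2r$, so it is not recoverable without the warm-start comparison. To fix your proof you would need to bring $S^{(2\zeta)}$ into the argument; the cleanest way is to drop the dual certificate entirely and bound the prices directly as above.
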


Instead of the dual approach used for proving \cref{lemma:meta}, since we are in the weighted case now, we employ a more primal approach.
The following proof is virtually the same as in the original papers~\cite{fujishige1995efficient,ShigenoI95} (see \cite[Lemma 3]{ShigenoI95} or \cite[Lemma A.1]{Tu22}), but now also handling the $<\Delta$ remaining elements in $X$ at termination.

\begin{proof}
 That $(w_1, w_2)$ is $\zeta$-approximate and $S_i$ is $w_i$-maximum in $\M_i$ is straightforward from \cref{alg:auction-weighted}.
 Let $p(S) \defeq \sum_{e \in S}p(e)$.
 Letting $q(e) \defeq \lfloor p(e)/2 \rfloor$ be the number of times we adjust $w_1(e)$ (since for each element we start by adjusting $w_2$ and then alternate between the two operations), we have
 \begin{equation}
   q(e) = (w_1^{(2\zeta)}(e) - w_1(e))/\zeta \quad\text{and}\quad q(e) \geq (w_2(e) - (w(e) - w_1^{(2\zeta)}(e)))/\zeta - 1.
   \label{eq:q}
 \end{equation}
 Let $S^{(2\zeta)}$ be the common basis that is both $w_1^{(2\zeta)}$-maximum in $\M_1$ and $w_2^{(2\zeta)}$-maximum in $\M_2$ whose existence is guaranteed by the previous iteration of the scaling framework.
 It then follows that
 \begin{align*}
   q(S_1 \setminus S_2)
     &\stackrel{(i)}{=} q(S_1) - q(S_2) \stackrel{(ii)}{\leq} \frac{w_1^{(2\zeta)}(S_1) - w_1(S_1)}{\zeta} - \frac{w_2(S_2) - w(S_2) + w_1^{(2\zeta)}(S_2)}{\zeta} + |S_2| \\
     &\stackrel{(iii)}{\leq} \frac{w_1^{(2\zeta)}(S_1) - w_1(S^{(2\zeta)}) - w_2(S^{(2\zeta)}) + w(S_2) - w_1^{(2\zeta)}(S_2)}{\zeta} + r \\
     &\stackrel{(iv)}{\leq} \frac{w_1^{(2\zeta)}(S_1) - w(S^{(2\zeta)}) + w(S_2) - w_1^{(2\zeta)}(S_2)}{\zeta} + r \\
     &\stackrel{(v)}{\leq} \frac{w_1^{(2\zeta)}(S^{(2\zeta)}) - w(S^{(2\zeta)}) + w_2^{(2\zeta)}(S^{(2\zeta)})}{\zeta} + r \stackrel{(vi)}{\leq} 3r
 \end{align*}
 where (i) used \cref{claim:s2-minus-s1},
 (ii) used \eqref{eq:q},
 (iii) used that $S_i$ is $w_i$-maximum in $\M_i$ for each $i \in \{1, 2\}$ and $|S_2| = r$,
 (iv) used that $(w_1, w_2)$ is a $\zeta$-approximate split and thus $w_1(S^{(2\zeta)}) + w_2(S^{(2\zeta)}) \geq w(S^{(2\zeta)})$,
 (v) used that $w(S_2) - w_1^{(2\zeta)}(S_2) \leq w_2^{(2\zeta)}(S_2)$ since $(w_1^{(2\zeta)}, w_2^{(2\zeta)})$ is $2\zeta$-approximate and that $S^{(2\zeta)}$ is $w_i^{(2\zeta)}$-maximum in $\M_i$ for each $i \in \{1, 2\}$,
 and (vi) used again that $(w_1^{(2\zeta)}, w_2^{(2\zeta)})$ is $2\zeta$-approximate and thus $w_1^{(2\zeta)}(S^{(2\zeta)}) + w_2^{(2\zeta)}(S^{(2\zeta)}) \leq w(S^{(2\zeta)}) + 2\zeta r$.
 Since $q(v) = 1/\eps$ for all $v \in (S_1 \setminus S_2) \setminus X$, we have
 \[ |S_1 \cap S_2| = |S_1| - |S_1 \setminus S_2| \geq |S_1| - |X| - 3\eps r \geq r - (3 \eps r + \Delta), \]
 using that $|X| \leq \Delta$ when the algorithm terminates.
\end{proof}

Note that \cref{alg:auction-weighted} can be implemented in essentially the same way as \cref{alg:auction}.
As such, with \cref{cor:parallel-impl,lemma:meta-weighted} we conclude that in $O\left(\frac{n}{\eps \Delta}\right)$ rounds of rank queries or $O\left(\frac{n\sqrt{r}\log(n/r)}{\eps \Delta}\right)$ rounds of independence queries we can get two bases $S_1^{(\zeta)}$ and $S_2^{(\zeta)}$ with an intersection of $|S_1^{(\zeta)} \cap S_2^{(\zeta)}| \geq r - (3\eps r + \Delta)$ together with a $\zeta$-approximate split $(w_1^{(\zeta)}, w_2^{(\zeta)})$ that certify their (approximate) optimality.

\paragraph{Phase 2: Augmentation Algorithm.}

After Phase 1, we have two bases $S_1$ and $S_2$ with a large intersection such that each of them is $w_i$-maximum in $\M_i$.
The goal of the second phase is thus to ``augment'' $(S_1, S_2)$ so that they become equal.
This may involve possibly changing the weight split $(w_1, w_2)$ so that the new $S_i$ is $w_i$-maximum in $\M_i$, but we will maintain the fact that $w_1(e) + w_2(e)$ remains the same, and thus in the end when $S_1 = S_2$ we get a common basis which is $(w_1, w_2)$-optimal for some $\zeta$-approximate $(w_1, w_2)$.
At this point we will end the current scale with $(w_1^{(\zeta)}, w_2^{(\zeta)}) \defeq (w_1, w_2)$ and $S^{(\zeta)} \defeq S_1 = S_2$.
The augmentation step can be done by finding an augmenting path in a \emph{weighted} exchange graph (see, e.g., \cite{fujishige1995efficient,Tu22}).
Similar to the unweighted version, the weighted exchange graph (with respect to $(S_1, S_2)$) consists of $O(n^2)$ edges, each corresponding to an exchange relationship that can be computed by a single independence or rank query.
Therefore, in parallel, we can build this graph in a single round of queries, after which we can find a shortest augmenting path in the graph through which we can obtain the new $(S_1, S_2)$ with a larger intersection and the new weight split (see, e.g., \cite[Lemma 3.3]{Tu22}).

\paragraph{Putting Everything Together.}

Overall, in each scale, we spend either $O\left(\frac{n}{\eps \Delta}\right)$ rounds of rank queries or $O\left(\frac{n\sqrt{r}\log(n/r)}{\eps\Delta}\right)$ rounds of independence queries to refine the $2\zeta$-approximate solution to two bases with large enough intersection in Phase 1.
We can then augment them to a single common basis in $O(\eps r + \Delta)$ rounds of queries in Phase 2.
Setting $\eps$ and $\Delta$ the same as in the proof of \cref{thm:parallel}, we get the following parallel weighted matroid intersection algorithms.

\ParallelWeighted*

\end{document}